\newcommand\keywords[1]{\textbf{Keywords}: #1}
\newtheorem{definition}{Definition}
\newtheorem{remark}{Remark}
\newtheorem{lemma}{Lemma}
\newtheorem{proposition}{Proposition}
\newtheorem{theorem}{Theorem}
\title{\textbf{The exact lower bound of CNOT-complexity for fault-tolerant quantum Fourier transform}}
\author[1,2,3]{Qiqing Xia}
\author[4]{Huiqin Xie}
\author[1,3]{Li Yang  \thanks{Corresponding author: yangli@iie.ac.cn}}
\affil[1]{Institute of Information Engineering, Chinese Academy of Sciences, Beijing, China}
\affil[2]{School of Cyber Security, University of Chinese Academy of Sciences, Beijing, China}
\affil[3]{Key Laboratory of Cyberspace Security Defense, Beijing, China}
\affil[4]{ Beijing Electronic Science and Technology Institute, Beijing, China}
\date{}
\begin{document}

\maketitle

\smallskip
\hrule
\smallskip

\begin{abstract}
The quantum Fourier transform (QFT) is a crucial subroutine in many quantum algorithms. In this paper, we study the exact lower bound problem of CNOT gate complexity for fault-tolerant QFT. First, we consider approximating the ancilla-free controlled-$R_k$ in the QFT logical circuit with a standard set of universal gates, aiming to minimize the number of T gates. Various single-qubit gates are generated in addition to CNOT gates when the controlled-$R_k$ is decomposed in different ways, we propose an algorithm that combines numerical and analytical methods to exactly compute the minimum T gate count for approximating any single-qubit gate with any given accuracy. Afterwards, we prove that the exact lower bound problem of T gate complexity for the QFT is NP-complete. Furthermore, we provide the transversal implementation of universal quantum gates and prove that it has the minimum number of CNOT gates and analyze the minimum CNOT count for transversally implementing the T gate. We then exactly compute the exact lower bound of CNOT gate complexity for fault-tolerant QFT with the current maximum fault-tolerant accuracy $10^{−2}$. Our work can provide a reference for designing algorithms based on active defense in a quantum setting.
\end{abstract}

\keywords{
quantum Fourier transform; fault-tolerant quantum computation; NP-complete problem; CNOT gate complexity  
}

\bigskip
\hrule
\smallskip

\section{Introduction}
A lower bound exists on the time for quantum gates due to physical limitations, making it necessary to study the optimal implementation of quantum circuits. Especially, the quantum Fourier transform (QFT) \cite{jozsa1998quantum,nielsen2010quantum} is a key subroutine in many quantum algorithms, such as Shor's algorithm for solving integer factorization and discrete logarithm problems \cite{shor1994algorithms,shor1997polynomial}, quantum amplitude estimation \cite{brassard2002quantum}, phase estimation \cite{kitaev1995quantum}, solving systems of linear equations \cite{harrow2009quantum}, quantum arithmetic \cite{ruiz2017quantum} and various fast quantum algorithms for solving hidden subgroup problems \cite{simon1997power,ettinger2000quantum,grigni2001quantum,hallgren2000normal,mosca1998hidden}. The efficient execution of the QFT on quantum computers is crucial for the success of algorithms. One of its significant applications is to break the cryptosystems such as RSA \cite{rivest1978method} and ElGamal \cite{elgamal1985public}. One of the most feasible candidates for these large-scale quantum computations is fault-tolerant quantum computation (FTQC) \cite{shor1996fault,preskill1998fault,gottesman1998theory,gottesman2010introduction,steane1999efficient}. FTQC must further rely on the Clifford gates and T gates, and any single-qubit gate can be approximated to an arbitrary accuracy using a circuit composed of these gates. However, the T gate requires ancillary states and CNOT gates for transversal implementation, which is relatively expensive \cite{nielsen2010quantum,fowler2009high}. The number of T gates can serve as a first-order approximation of the resources for physically implementing a quantum circuit.

Currently, there are two main methods for optimizing the number of T gates in QFT circuits. One method is to define an approximate version of the QFT (AQFT) \cite{coppersmith2002approximate,barenco1996approximate} by removing all rotation gates with angles below a certain threshold, thereby optimizing the number of T gates for the AQFT \cite{nam2020approximate, park2022t}. The other is to decompose the controlled-$R_k$ gates in the QFT circuit and use gate synthesis or state distillation methods to approximate the single-qubit gates. In \cite{goto2014resource}, this work compares the resources for gate synthesis and state distillation, concluding that gate synthesis is a better method for implementing the fault-tolerant QFT. The optimal method for gate synthesis is provided by Solovay-Kitaev algorithm \cite{kitaev2002classical}, and the number of T gates for approximating any single-qubit gate is $\mathcal{O}(log^{3.97}/\epsilon)$. Based on this, \cite{fowler2011constructing} uses a numerical method to find the fault-tolerant approximation properties of the phase rotation gates in exponential time. The optimization methods for the number of T gates required to approximate any single-qubit gate include probabilistic algorithms with feedback \cite{bocharov2015probabilistic}, efficient random algorithms \cite{selinger2012efficient}, using Repeat-Until-Success circuits \cite{bocharov2015efficient,paetznick2013repeat}, using auxiliary qubits \cite{kliuchnikov2013asymptotically,kim2018efficient}, efficient approximation algorithms achieving accuracy $10^{-15}$ \cite{kliuchnikov2015practical}, rounding off a unitary to a unitary over the ring $\mathcal{Z}[i,1/\sqrt{2}]$ \cite{kliuchnikov2013synthesis}, and probabilistic approximation for Z-rotation gates based on the gridsynth algorithm \cite{ross2016optimal}.

These related works focus on optimizing the number of logical T gates associated with the QFT using approximation algorithms. However, the specific implementation of a quantum computer is influenced by physical factors, as it is fundamentally a physical system. In FTQC, the CNOT gate is regarded as the main cause of quantum errors \cite{xia2024analysis,yang2020effect} and has a longer execution time than single-qubit gates. The computational complexity of quantum algorithms can be reduced to the number of CNOT gates \cite{knill2005quantum}, and minimizing the number of CNOT gates is typically regarded as a secondary optimization objective, such as optimizing the number of CNOT gates in the quantum circuit for Shor's algorithm \cite{liu2021cnot,liu2023minimizing}. However, the problem of CNOT-optimal quantum circuit synthesis over gate sets consisting of CNOT and phase gates is NP-complete \cite{amy2019controlled}. In some cases, optimizations aimed at reducing the number of T gates can result in a significant increase in the number of CNOT gates. Fortunately, we have found a method to make up for the explosion of CNOT gates with cost savings by optimizing T gates. Proving the ancilla-assisted gate synthesis with the minimum number of CNOT gates is challenging, while it is provable for the ancilla-free gate synthesis. Thus, in this way, we can obtain the exact lower bound of CNOT gate complexity for fault-tolerant QFT, which is different from previous works. To our knowledge, there has not been such an analysis before. This analysis is of great significance and can provide a reference for active defense in a quantum setting.

\textbf{Our contributions} 
In this paper, we consider the logical resources and fault-tolerance resources of the QFT. The logical resources are measured by the number of T gates, which is called the T-count. Since the fault-tolerant implementation of the T gate relies on CNOT gates, the fault-tolerance resources are measured by the number of CNOT gates, which is called the CNOT-count. In more detail, our main contributions are as follows:
\begin{itemize}
    \item  First, we find an ancilla-free gate synthesis method for a controlled-$R_k$ gate with the minimum T-count and we have provided the corresponding proof. Interestingly, this method also achieves the minimum fault-tolerant CNOT-count. When the controlled-$R_k$ is decomposed in different ways, it generates different single-qubit gates in addition to CNOT gates. Since current quantum computers cannot implement all single-qubit gates, as a generalization we propose an algorithm to exactly compute the minimum T-count for approximating any single-qubit gate with any given accuracy by using Hadamard gates and T gates. This algorithm uses analytical methods to avoid traversing all quantum states in the normalized space and uses numerical methods to determine whether the error conditions are satisfied. Afterwards, we prove that the exact lower bound problem of the T-count for the QFT is at least as hard as the K-SAT problem and it is an NP-complete problem.
    
    \item Furthermore, we provide a transversal implementation of Z-rotation gates satisfying certain conditions with the minimum CNOT-count and analyze the minimum CNOT-count for transversally implementing the T gate. We then compute the exact lower bound of CNOT gate complexity for fault-tolerant QFT with different input lengths at the current maximum fault-tolerant accuracy $10^{-2}$ \cite{fowler2009high, knill2005quantum}. In particular, we estimate the lower bound of the effective execution time of the QFT based on Steane code on ion trap computers, which can provide a reference for quantum computation based on the QFT. Finally, we discuss that determining the best possible value of $c$ in $\mathcal{O}(\log^c(1/\epsilon))$ implied by the Solovay–Kitaev theorem is at least NP-hard, and the circuit optimization problem is at least QMA-hard.
\end{itemize}

\section{Preliminaries}
\subsection{Common quantum gates}
We briefly recall Clifford and T quantum gates, including universal gates and Pauli-X (Y, Z) gates. Their circuit symbols and matrix representations are shown in Figure \ref{fig:fuqg}.
\begin{figure}[H]
    \centering
    \includegraphics[width=0.75\textwidth]{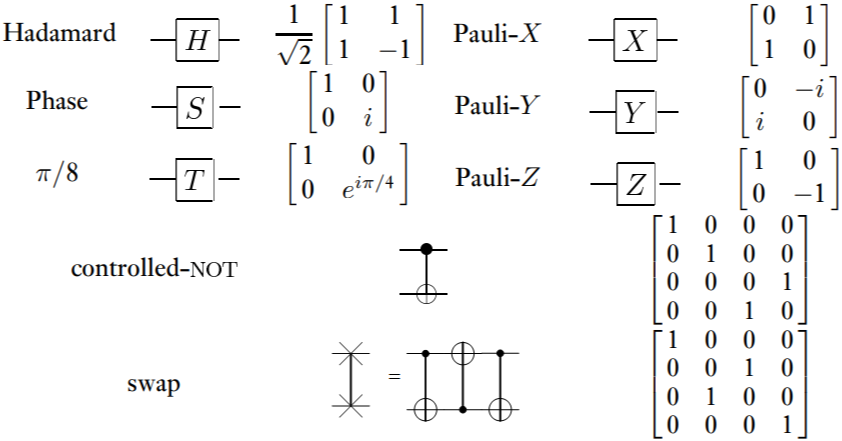}
    \caption{Clifford and T quantum gates}
    \label{fig:fuqg}
\end{figure}
Here, \{H, S, T, CNOT\} is called the standard set of universal gates. When the Pauli matrices are exponentiated, they will produce three types of useful unitary matrices, namely, the rotation operators around the $\hat{x}$, $\hat{y}$ and $\hat{z}$ axes, defined as follows:

\begin{equation}\label{R_xyz-eq}
\begin{aligned}
&R_x(\theta)\equiv e^{-i\theta X/2}=cos \frac{\theta}{2}I-isin \frac{\theta}{2}X=\begin{bmatrix}
cos \frac{\theta}{2} & -isin \frac{\theta}{2} \\
-isin \frac{\theta}{2} & cos \frac{\theta}{2}
\end{bmatrix},   \\
&R_y(\theta)\equiv e^{-i\theta Y/2}=cos \frac{\theta}{2}I-isin \frac{\theta}{2}Y=\begin{bmatrix}
cos \frac{\theta}{2} & -sin \frac{\theta}{2} \\
sin \frac{\theta}{2} & cos \frac{\theta}{2}
\end{bmatrix},   \\
&R_z(\theta)\equiv e^{-i\theta Z/2}=cos \frac{\theta}{2}I-isin \frac{\theta}{2}Z=\begin{bmatrix}
e^{\frac{-i\theta}{2}} & 0 \\
0 & e^{\frac{i\theta}{2}}
\end{bmatrix},
\end{aligned}
\end{equation}
if rotating by an angle $\theta$ around the axis $\hat{n}=(n_x,n_y,n_z)$, then the rotation operator can be denoted as
    \begin{equation}\label{R_n-eq}
        R_{\hat{n}}(\theta)=\cos{\frac{\theta}{2}}I-i\sin{\frac{\theta}{2}}(n_xX+n_yY+n_zZ),
    \end{equation}
these rotation operators play a crucial role in decomposing controlled unitary operators.

\subsection{Approximating controlled single-qubit unitary operators}
We first introduce a lemma on the decomposition of a single-qubit unitary operator and a theorem on the decomposition of its controlled operator, as proven in \cite{nielsen2010quantum}:
\begin{lemma}\label{Unitary operator decomposition-lemma}
Suppose $U$ is a unitary operator on a single qubit, then there exist real numbers $\alpha,\beta,\gamma$ and $\delta \in [0,2\pi)$, such that
\begin{equation}\label{Unitary operator decomposition-lemma-eq1}
    U=e^{i\alpha}R_z(\beta)R_y(\gamma)R_z(\delta) =\begin{bmatrix}
    e^{i(\alpha-\beta/2-\delta/2)}\cos{\frac{\gamma}{2}} & -e^{i(\alpha-\beta/2+\delta/2)}\sin{\frac{\gamma}{2}}\\
    e^{i(\alpha+\beta/2-\delta/2)}\sin{\frac{\gamma}{2}} & e^{i(\alpha+\beta/2+\delta/2)}\cos{\frac{\gamma}{2}}
    \end{bmatrix}.
\end{equation}

It can be extended to a more general case: suppose $\hat{m}$ and $\hat{n}$ are non-parallel real unit vectors in the three-dimensional space, then $U$ can be written as
\begin{equation}\label{Unitary operator decomposition-lemma-eq2}
U=e^{i\alpha}R_{\hat{n}}(\beta)R_{\hat{m}}(\gamma)R_{\hat{n}}(\delta), 
\end{equation}
for appropriate choices of $\alpha,\beta,\gamma$ and $\delta$.
\end{lemma}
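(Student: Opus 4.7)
The plan is to prove the two parts in sequence, starting with the Z--Y--Z decomposition \eqref{Unitary operator decomposition-lemma-eq1} by direct matrix analysis, and then lifting the result to the general two--axis case \eqref{Unitary operator decomposition-lemma-eq2} via the double cover $SU(2)\to SO(3)$.

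First I would exploit unitarity of $U$. Since the first row of $U$ has unit norm, $|U_{00}|^2+|U_{01}|^2=1$, so there is an angle $\gamma\in[0,2\pi)$ with $|U_{00}|=|\cos(\gamma/2)|$ and $|U_{01}|=|\sin(\gamma/2)|$. Extracting a phase from each of the four matrix entries then gives $U_{00}=e^{i\varphi_{00}}\cos(\gamma/2)$, $U_{01}=-e^{i\varphi_{01}}\sin(\gamma/2)$, $U_{10}=e^{i\varphi_{10}}\sin(\gamma/2)$, $U_{11}=e^{i\varphi_{11}}\cos(\gamma/2)$, and the orthonormality of the two columns together with $\det U\in S^{1}$ forces a single linear relation among the four phases, leaving three independent phase parameters. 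Reparametrising these three phases as $(\alpha-\beta/2-\delta/2,\alpha-\beta/2+\delta/2,\alpha+\beta/2-\delta/2)$ and computing the product $e^{i\alpha}R_z(\beta)R_y(\gamma)R_z(\delta)$ directly from \eqref{R_xyz-eq} reproduces the right-hand side of \eqref{Unitary operator decomposition-lemma-eq1} entry by entry, which both verifies the identity and shows that the system for $(\alpha,\beta,\gamma,\delta)$ is consistent and solvable in $[0,2\pi)$.

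For the general statement \eqref{Unitary operator decomposition-lemma-eq2}, I would factor off the global phase $e^{i\alpha}$ so that $U'=e^{-i\alpha}U\in SU(2)$, and then use the fact that the double cover $SU(2)\to SO(3)$ sends $R_{\hat{n}}(\theta)$, as defined in \eqref{R_n-eq}, to the Euclidean rotation by $\theta$ about the axis $\hat{n}$. The claim then reduces to the classical geometric fact that any rotation of $\mathbb{R}^3$ admits an Euler--type decomposition $R_{\hat{n}}(\beta)R_{\hat{m}}(\gamma)R_{\hat{n}}(\delta)$ about two non-parallel axes. Geometrically, given a target rotation $R$, I would pick $R_{\hat{n}}(\delta)$ to bring the preimage of $\hat{n}$ under $R$ into the plane spanned by $\hat{n}$ and $\hat{m}$, then apply $R_{\hat{m}}(\gamma)$ to align the rotated axis with $\hat{n}$, and finally apply $R_{\hat{n}}(\beta)$ to fix the residual rotation about $\hat{n}$; non-parallelism of $\hat{m}$ and $\hat{n}$ guarantees that the intermediate plane is two-dimensional so that each step is well-posed. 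Pulling the three-factor decomposition back through the covering map and reinstating $e^{i\alpha}$ gives \eqref{Unitary operator decomposition-lemma-eq2}.

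The main obstacle is the geometric step in the second part: one has to show that the Euler-style procedure around two non-parallel axes reaches \emph{every} element of $SO(3)$, and to handle a couple of degenerate configurations (for instance, when $R\hat{n}=\pm\hat{n}$, in which case $\gamma\in\{0,\pi\}$ and only $\beta+\delta$ is determined). Once this surjectivity is justified and the lifting through the $2{:}1$ cover is carried out consistently, the angles $\alpha,\beta,\gamma,\delta$ can all be chosen in $[0,2\pi)$, completing the proof.
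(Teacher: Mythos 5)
Your treatment of the first identity \eqref{Unitary operator decomposition-lemma-eq1} is correct and is the standard textbook argument (the paper itself offers no proof of this lemma; it simply imports it from Nielsen--Chuang): unitarity fixes the moduli pattern $\cos(\gamma/2),\sin(\gamma/2)$, column orthonormality leaves three independent phases, and matching them to $\alpha\pm\beta/2\pm\delta/2$ against a direct computation of $e^{i\alpha}R_z(\beta)R_y(\gamma)R_z(\delta)$ finishes it, up to the routine remark that signs and ranges can be absorbed into $e^{i\alpha}$.

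The second part, however, has a genuine gap, and it sits exactly where you flagged the ``main obstacle'': the three-factor Euler-style decomposition about two \emph{fixed} non-parallel axes does not reach every element of $SO(3)$ unless the axes are orthogonal, so the surjectivity you hope to ``justify'' is in fact false in general. Pass to $SO(3)$ via the adjoint action (the global phase drops out) and track the image of $\hat{n}$: since $R_{\hat{n}}(\delta)$ fixes $\hat{n}$, the product $R_{\hat{n}}(\beta)R_{\hat{m}}(\gamma)R_{\hat{n}}(\delta)$ sends $\hat{n}$ to $R_{\hat{n}}(\beta)R_{\hat{m}}(\gamma)\hat{n}$; the vector $R_{\hat{m}}(\gamma)\hat{n}$ stays on the cone of half-angle $\theta=\angle(\hat{n},\hat{m})$ about $\hat{m}$, hence makes an angle at most $2\min(\theta,\pi-\theta)$ with $\hat{n}$, and the outer rotation about $\hat{n}$ preserves the angle to $\hat{n}$. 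So a necessary condition for \eqref{Unitary operator decomposition-lemma-eq2} is $\angle(R_U\hat{n},\hat{n})\le 2\min(\theta,\pi-\theta)$, where $R_U$ is the Bloch rotation of $U$; when $\theta\neq\pi/2$ this bound is strictly less than $\pi$, so for instance any $U$ whose Bloch rotation is by $\pi$ about an axis orthogonal to $\hat{n}$ (it sends $\hat{n}\mapsto-\hat{n}$) admits no decomposition of the form \eqref{Unitary operator decomposition-lemma-eq2}, for any $\alpha,\beta,\gamma,\delta$. Concretely, your procedure breaks at ``apply $R_{\hat{m}}(\gamma)$ to align the rotated axis with $\hat{n}$'': a rotation about $\hat{m}$ can carry a vector $v$ to $\hat{n}$ only if $\angle(v,\hat{m})=\angle(\hat{n},\hat{m})$, and since the preliminary rotation about $\hat{n}$ cannot change $\angle(v,\hat{n})=\angle(R_U^{-1}\hat{n},\hat{n})$, this is unachievable whenever that angle exceeds $2\theta$; the degenerate cases and the $2{:}1$ lift you worried about are not the real issue. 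The statement \eqref{Unitary operator decomposition-lemma-eq2} (Nielsen--Chuang's Exercise 4.11, transcribed here without proof) is only valid with an orthogonality hypothesis on $\hat{m},\hat{n}$ --- which is precisely the case your part one already covers --- or with the conclusion weakened to a finite alternating product $R_{\hat{n}}(\beta_1)R_{\hat{m}}(\gamma_1)R_{\hat{n}}(\beta_2)\cdots$ of unbounded length; no proof strategy can rescue the three-factor form as stated.
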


For any controlled single-qubit unitary operator controlled-$U$, up to a global phase, it can be decomposed into two CNOT gates and three single-qubit unitary operators, the theorem is as follows:
\begin{theorem}\label{Controlled-$U$ decomposition-theorem}
Suppose $U$ is a unitary gate on a single qubit, then there exist single-qubit unitary operators $P\equiv e^{i\alpha/2}R_z(\alpha), A\equiv R_z(\beta)R_y(\gamma/2),B\equiv R_y(-\gamma/2)R_z(-(\delta+\beta)/2),C\equiv R_z((\delta-\beta)/2)$, such that $ABC=I$ and $U=e^{i\alpha}AXBXC$, where $\alpha$ is a global phase factor. Then, the controlled-$U$ operation is $C(U)=(P\otimes A) \cdot CNOT \cdot (I\otimes B) \cdot CNOT \cdot (I\otimes C)$. The circuit implementation is shown in Figure \ref{fig:citco}.
\begin{figure}[H]
\centerline{\includegraphics[width=0.75\textwidth]{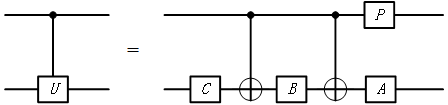}}
  \caption{Circuit implementing the controlled-$U$ operation}
   \label{fig:citco}
\end{figure}
\end{theorem}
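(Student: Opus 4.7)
The plan is to invoke Lemma \ref{Unitary operator decomposition-lemma} to write $U=e^{i\alpha}R_z(\beta)R_y(\gamma)R_z(\delta)$ for suitable angles, then verify the three algebraic facts that the theorem packages together: (i) $ABC=I$, (ii) $AXBXC=R_z(\beta)R_y(\gamma)R_z(\delta)$, and (iii) the displayed two-qubit circuit realises the controlled-$U$.

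First I would check (i) by direct multiplication of the three defined single-qubit operators. The two inner $R_y$ factors combine as $R_y(\gamma/2)R_y(-\gamma/2)=I$, and the remaining $R_z$ factors collapse to $R_z(\beta)R_z(-(\delta+\beta)/2)R_z((\delta-\beta)/2)=R_z(\beta)R_z(-\beta)=I$. Next I would prove (ii) using the conjugation identities $X R_y(\theta) X = R_y(-\theta)$ and $X R_z(\theta) X = R_z(-\theta)$, which follow immediately from $XYX=-Y$, $XZX=-Z$ together with the definitions in \eqref{R_xyz-eq}. Inserting $X^2=I$ between the middle block and using these identities, the product telescopes:
\begin{equation*}
AXBXC = R_z(\beta)\bigl[R_y(\gamma/2)R_y(\gamma/2)\bigr]\bigl[R_z((\delta+\beta)/2)R_z((\delta-\beta)/2)\bigr]=R_z(\beta)R_y(\gamma)R_z(\delta),
\end{equation*}
so $e^{i\alpha}AXBXC=U$, as required.

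Finally for (iii) I would evaluate the circuit on the two computational basis states of the control line. When the control is $|0\rangle$ both CNOTs act trivially on the target, so the target sees $ABC=I$ by (i); meanwhile $P|0\rangle=e^{i\alpha/2}R_z(\alpha)|0\rangle=|0\rangle$, so the state is unchanged. When the control is $|1\rangle$ each CNOT applies $X$ to the target, so the target sees $AXBXC$, which equals $R_z(\beta)R_y(\gamma)R_z(\delta)$ by (ii); and $P|1\rangle=e^{i\alpha/2}\cdot e^{i\alpha/2}|1\rangle=e^{i\alpha}|1\rangle$, which supplies exactly the missing global-phase factor so that the target transformation becomes $e^{i\alpha}R_z(\beta)R_y(\gamma)R_z(\delta)=U$. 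By linearity this is precisely the action of $C(U)$.

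There is no real obstacle in this argument; the only step that warrants care is the $X$-conjugation manipulation in (ii), where one must track signs of the exponents and ensure that the halved angles $\gamma/2$, $(\delta\pm\beta)/2$ recombine correctly into the Lemma \ref{Unitary operator decomposition-lemma} form. Once that is done cleanly, claim (iii) is essentially a bookkeeping check on the two basis vectors of the control qubit, relying on the careful choice of the global phase $e^{i\alpha/2}$ in the definition of $P$.
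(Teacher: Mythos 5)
Your proof is correct and is essentially the standard argument behind this theorem (which the paper simply imports from Nielsen--Chuang without reproducing a proof): the Z--Y decomposition of Lemma \ref{Unitary operator decomposition-lemma}, the identities $XR_y(\theta)X=R_y(-\theta)$, $XR_z(\theta)X=R_z(-\theta)$ giving $ABC=I$ and $e^{i\alpha}AXBXC=U$, and the check on the control basis states with $P=e^{i\alpha/2}R_z(\alpha)$ supplying the phase $e^{i\alpha}$ on the $|1\rangle$ branch. No gaps; nothing further is needed.
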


The set of universal gates \{H, S, T, CNOT\} is discrete, while the set of unitary operations is continuous. Approximating any unitary operation using this discrete set will inevitably introduce errors, as defined in \cite{nielsen2010quantum}:
\begin{definition}\label{Error-Unitary operator}
    Let $U$ and $V$ be two unitary operators on the same state space, where $U$ is the desired target unitary operator and $V$ is the unitary operator actually implemented. Define the error when $V$ is implemented instead of $U$ as
    \begin{equation}\label{Error-Unitary operator-eq}
        E(U,V)=\max_{|\psi\rangle}||(U-V)|\psi\rangle||
    \end{equation}
where the maximum takes all normalized quantum states $|\psi\rangle$ in the state space.
\end{definition}

The Solovay–Kitaev theorem \cite{kitaev2002classical} is one of the most important fundamental results in the theory of quantum computation. It shows that for any single-qubit gate $U$ and given any $\epsilon>0$, $U$ can be approximated to an accuracy $\epsilon$ with $\mathcal{O}(log^c(1/\epsilon))$ finite gates.
\begin{theorem}\label{sk-theorem}
    Let $SU(2)$ be the set of all single-qubit unitary matrices with determinant 1, and $\mathcal{G}$ be a finite set of elements in $SU(2)$ including its own inverses, used to simulate other single-qubit gates. Let $g_1\cdots g_l$ $(g_i\in \mathcal{G}, i=1,\cdots,l)$ be a word of length $l$ from $\mathcal{G}$, $\mathcal{G}_l$ be the set of all words of length at most $l$, and $\langle \mathcal{G} \rangle$ be the set of all words with finite length. If $\langle \mathcal{G} \rangle$ is dense in $SU(2)$, then $\mathcal{G}_l$ is an $\epsilon$-net in $SU(2)$ for $l=\mathcal{O}(log^c(1/\epsilon))$, where $c \approx 2$ when using the measure $E(\cdot,\cdot)$, or $c \approx 4$ when using the trace distance $D(\cdot,\cdot)$, as $D(\cdot,\cdot)=2E(\cdot,\cdot)$. In our work, we use the measure $E(\cdot,\cdot)$ as show in Definition \ref{Error-Unitary operator}.
\end{theorem}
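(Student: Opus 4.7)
The plan is to prove this theorem by the standard iterative (``telescoping'') argument of Solovay and Kitaev, which bootstraps a coarse initial net into an increasingly fine one via group commutators. Concretely, I will construct a sequence of nets $\mathcal{G}_{l_0} \subset \mathcal{G}_{l_1} \subset \cdots$ with $\mathcal{G}_{l_k}$ an $\epsilon_k$-net in $SU(2)$, where the pair $(\epsilon_k, l_k)$ satisfies a recurrence of the form $\epsilon_{k+1} = C\,\epsilon_k^{3/2}$ and $l_{k+1} = 5\,l_k$, and then solve the recurrence to read off the exponent $c$.

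For the base case I would use density of $\langle \mathcal{G}\rangle$ in $SU(2)$ together with compactness of $SU(2)$ to pick $l_0$ such that $\mathcal{G}_{l_0}$ is an $\epsilon_0$-net, with $\epsilon_0$ small enough that the recursion contracts, i.e. $C\sqrt{\epsilon_0} < 1$. The inductive step is the crux. Given a target $U\in SU(2)$ and an $\epsilon_k$-accurate approximation $U_k \in \mathcal{G}_{l_k}$, the residual $\Delta := U U_k^\dagger$ lies within $\epsilon_k$ of $I$. I would then decompose $\Delta$ as a \emph{balanced group commutator} $\Delta = V W V^\dagger W^\dagger$ with $\|V-I\|, \|W-I\| = O(\sqrt{\epsilon_k})$. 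This is done via the $SU(2)\leftrightarrow SO(3)$ correspondence: write $\Delta=\exp(-i\vec d\cdot\vec\sigma/2)$ with $|\vec d|\lesssim \epsilon_k$, and then choose $V$ and $W$ as rotations of angle $\Theta=O(\sqrt{\epsilon_k})$ about carefully chosen axes so that, by Baker--Campbell--Hausdorff, $VWV^\dagger W^\dagger$ reproduces $\Delta$ to leading order. Next, apply the inductive hypothesis to $V$ and $W$ to obtain approximants $\widetilde V,\widetilde W\in\mathcal{G}_{l_k}$ each within $\epsilon_k$, and define $U_{k+1} := \widetilde V\widetilde W\widetilde V^\dagger \widetilde W^\dagger U_k$, a word of length $5\,l_k$. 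A direct error estimate, tracking how the commutator amplifies the perturbations $V\mapsto\widetilde V$ and $W\mapsto\widetilde W$, shows $E(U, U_{k+1}) = O(\epsilon_k^{3/2})$, closing the induction.

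Finally, solving $\epsilon_{k+1} = C\,\epsilon_k^{3/2}$ gives $\epsilon_k\le(C\epsilon_0)^{(3/2)^k}/C$, so reaching accuracy $\epsilon$ needs $k=O(\log\log(1/\epsilon))$ iterations, yielding word length $l = l_0\cdot 5^k = O(\log^{\log_{3/2}5}(1/\epsilon))$; this already gives $c \le \log_{3/2}5\approx 3.97$, and sharpening the commutator step (allowing a modestly larger but bounded blow-up per iteration while improving the exponent in the $\epsilon$-recursion) can push $c$ toward the $c\approx 2$ quoted under the measure $E(\cdot,\cdot)$. The main obstacle is precisely the balanced commutator lemma: producing constructive $V,W$ with both factors of norm $O(\sqrt{\epsilon_k})$ requires delicate control of rotation axes and angles on $SU(2)$, and tracking constants tightly enough to drive the recursion is the technical heart of the argument. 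Everything else --- the base case from density and compactness, propagation of errors through products of unitaries via the sub-multiplicativity of $E(\cdot,\cdot)$, and the solution of the recurrence --- is routine once this geometric lemma is in place.
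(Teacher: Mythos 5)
The paper never proves this statement: Theorem \ref{sk-theorem} is quoted as a known preliminary and attributed to the cited literature (Kitaev's book), so there is no in-paper proof to compare against. Your sketch is the standard Solovay--Kitaev argument --- base net from density plus compactness, residual $\Delta=UU_k^\dagger$ close to $I$, balanced group-commutator decomposition $\Delta\approx VWV^\dagger W^\dagger$ with $\|V-I\|,\|W-I\|=O(\sqrt{\epsilon_k})$, recursion $\epsilon_{k+1}=C\epsilon_k^{3/2}$, $l_{k+1}=5l_k$ --- and as far as it goes it is the right and standard route to a theorem of this type.

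The genuine gap is the last step, where you claim that ``sharpening the commutator step'' and ``tracking constants tightly'' pushes the exponent from $\log_{3/2}5\approx 3.97$ toward the $c\approx 2$ asserted in the statement. The exponent in this construction is structural, not a matter of constants: it equals $\log 5/\log(3/2)$ because each level multiplies the word length by $5$ while raising the accuracy to the power $3/2$, and no tightening of the error bookkeeping changes either number. Getting $c$ near $2$ (or lower) requires a genuinely different construction (e.g.\ higher-order commutator schemes, reuse of sub-approximations, or non-constructive counting arguments), none of which your outline supplies. Relatedly, you should not try to derive the paper's split ``$c\approx 2$ for $E(\cdot,\cdot)$ versus $c\approx 4$ for $D(\cdot,\cdot)$'' from the relation $D=2E$: a constant rescaling of the metric only shifts $\epsilon$ by a factor of $2$ and cannot alter the exponent of $\log(1/\epsilon)$, so that part of the statement cannot follow from any argument of the kind you propose and must be taken from (or corrected against) the cited sources. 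With your argument as written, what you actually prove is the theorem with $c=\log_{3/2}5\approx 3.97$, which is weaker than the statement as given.
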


\subsection{The quantum Fourier transform}
\begin{definition}\label{QFT}
The QFT is a linear operator on an orthonormal basis $|0\rangle,\cdots,|N-1\rangle$, and its action on the basis states is
\begin{equation}\label{QFT-eq1}
    |j\rangle \xrightarrow{QFT(N)} \frac{1}{\sqrt{N}}\sum_{k=0}^{N-1}e^{2\pi ijk/N}|k\rangle.
\end{equation}
Equivalently, the action on an arbitrary state can be written
\begin{equation}\label{QFT-eq2}
    \sum_{j=0}^{N-1}x_j|j\rangle \xrightarrow{QFT(N)} \frac{1}{\sqrt{N}}\sum_{k=0}^{N-1}y_k|k\rangle,
\end{equation}
where $y_k$ is the discrete Fourier transform of the amplitudes $x_j$, $y_k=\sum_{j=0}^{N-1}x_je^{2\pi ijk/N}$.
\end{definition}

The QFT is a unitary transformation, and thus it can be implemented as a dynamic process on a quantum computer as follows:
\begin{center}
\fbox{\begin{minipage}{1\textwidth}
\small
\textbf{Useful product representation of the QFT}
\begin{enumerate}
    \item  First, suppose $N=2^n$, and represent $j$ in binary as $j_1j_2\cdots j_n$, then $j=j_12^{n-1}+j_22^{n-2}+\cdots+j_n$.
    
    \item After performing the QFT on the basis state $|j\rangle$, then we can obtain the useful product representation:
    \begin{equation}\label{Useful product form-eq}
    \begin{aligned}
        |j_1,\cdots,j_n\rangle \xrightarrow{QFT(N=2^n)}\frac{1}{2^{n/2}}\left(|0\rangle+e^{2\pi i0.j_n}|1\rangle \right)\left(|0\rangle+e^{2\pi i0.j_{n-1}j_n}|1\rangle \right)\cdots \left(|0\rangle+e^{2\pi i0.j_1j_2\cdots j_n}|1\rangle \right),
    \end{aligned}
    \end{equation}
    where $0.j_{\ell}j_{\ell+1}\cdots j_m=\frac{j_{\ell}}{2}+\frac{j_{\ell+1}}{2^2}+\cdots+\frac{j_m}{2^{m-\ell+1}}$ is a binary fraction.

    \item Each tensor product component can be implemented by a controlled single-qubit unitary operation (controlled-$R_k$). Here, the single-qubit gate $R_k$ denotes the unitary transformation
     \begin{equation}\label{controlled-Rk-eq}
         R_k=\begin{bmatrix} 
         1&0 \\ 
         0&e^{2\pi i/2^k}       
         \end{bmatrix}.
     \end{equation}
\end{enumerate}
\end{minipage}}
\end{center}

According to the useful product representation of the QFT in Eq. (\ref{Useful product form-eq}), we introduce the construction of effective quantum circuits for the QFT \cite{nielsen2010quantum}, as shown in Figure \ref{fig:eqcftqft}.
\begin{figure}[H]
\centerline{\includegraphics[width=0.98\textwidth]{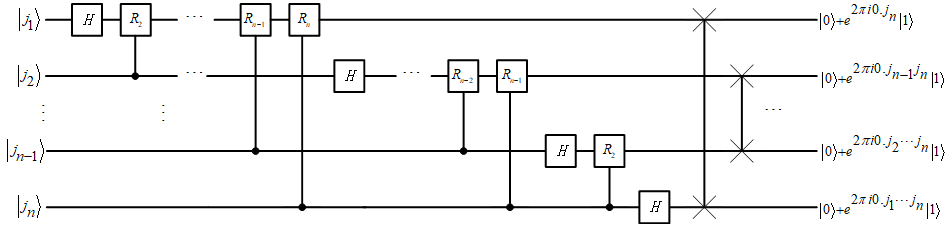}}
  \caption{Efficient quantum circuit for the QFT}
   \label{fig:eqcftqft}
\end{figure}

As can be seen from Figure \ref{fig:eqcftqft}, the efficient quantum circuit for the QFT requires $n$ Hadamard gates, $n(n-1)/2$ controlled-$R_k$ gates, and $\lfloor \frac{n}{2} \rfloor$ swap gates.

\section{The exact lower bound problem of T-count for the QFT}
Although the standard QFT circuit is typically implemented in a specific way, as shown in Figure \ref{fig:eqcftqft}, the QFT circuit is not unique and can achieve the same function through equivalent circuit transformations. Therefore, all QFT circuits must be equivalent to the circuit shown in Figure \ref{fig:eqcftqft}, and thus we can aim to analyze the exact lower bound of the T-count for the QFT based on the circuit in Figure \ref{fig:eqcftqft}.
\subsection{Algorithm for approximating the ancilla-free controlled-$R_k$ with the minimum T-count}
Since current quantum computers cannot implement all single-qubit gates, the ancilla-free gate synthesis approximates single-qubit gates with the Hadamard gate and T gate. To approximate the controlled-$R_k$ operation, the ancilla-free controlled-$R_k$ can be decomposed into single-qubit gates in addition to CNOT gates. However, when different decomposition methods are used, the single-qubit gates decomposed from the controlled-$R_k$ are also different. To approximate different single-qubit gates, we introduce a theorem from \cite{nielsen2010quantum}, for which we have revised the relevant content. The theorem states that any single-qubit unitary operation can be approximated to arbitrary accuracy using the Hadamard gate and T gate, as follows:
\begin{theorem}\label{HTHT-theorem}
    Up to an unimportant global phase, $T$ has the same effect as $R_z(\pi/4)$, and $HTH$ has the same effect as $R_x(\pi/4)$. Combining these two operations, according to Eq. (\ref{R_n-eq}), we obtain
    
    \begin{equation}\label{HTHT-theorem-eq1}
    \begin{aligned}
        T(HTH)&=e^{\frac{i\pi}{4}}R_z(\pi/4)R_x(\pi/4)\\
        &=e^{\frac{i\pi}{4}}\left(\cos{\frac{\pi}{8}}I-i\sin{\frac{\pi}{8}}Z\right)\left(\cos{\frac{\pi}{8}}I-i\sin{\frac{\pi}{8}}X\right)\\
        &=e^{\frac{i\pi}{4}}\left((\cos{\frac{\pi}{8}})^2I-i\sin{\frac{\pi}{8}}\left(\cos{\frac{\pi}{8}}(X+Z)+\sin{\frac{\pi}{8}}Y\right)\right)\\
        &=e^{\frac{i\pi}{4}}\begin{bmatrix} \frac{1+e^{-i\pi/4}}{2}&\frac{e^{-i\pi/4}-1}{2}\\ \frac{1-e^{i\pi/4}}{2}&\frac{1+e^{i\pi/4}}{2} \end{bmatrix}.
    \end{aligned}
    \end{equation}

Using only the Hadamard gate and T gate, the rotation operator $R_{\hat{n}}(\theta)=R_z(\pi/4)R_x(\pi/4)$ can be constructed to approximate any rotation operator around the $\hat{n}$ axe, where $\cos{\frac{\theta}{2}}=(\cos{\frac{\pi}{8}})^2$, $\hat{n}=\frac{\sin{\frac{\pi}{8}}}{\sin{\frac{\theta}{2}}}\left(\cos{\frac{\pi}{8}},\sin{\frac{\pi}{8}},\cos{\frac{\pi}{8}}\right)$. Let $R_{\hat{m}}(\theta))=HR_{\hat{n}}(\theta)H$, $\hat{m}=\frac{\sin{\frac{\pi}{8}}}{\sin{\frac{\theta}{2}}}\left(\cos{\frac{\pi}{8}},-\sin{\frac{\pi}{8}},\cos{\frac{\pi}{8}}\right)$, $\hat{n}$ and $\hat{m}$ are real unit vectors that are not parallel in the three-dimensional space. According to Eq. (\ref{Unitary operator decomposition-lemma-eq2}), there exist appropriate positive integers $n_1,n_2,n_3$,
\begin{equation}\label{HTHT-theorem-eq2}
    E(U,R_{\hat{n}}(\theta))^{n_1}HR_{\hat{n}}(\theta)^{n_2}HR_{\hat{n}}(\theta)^{n_3})<\epsilon,
\end{equation}
where $\epsilon$ is the desired accuracy. For any given single-qubit operator $U$ and any $\epsilon>0$, a quantum circuit consisting only of Hadamard gates and T gates can approximate $U$ within $\epsilon$.
\end{theorem}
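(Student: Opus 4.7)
The plan is to exploit the Euler-like decomposition Lemma \ref{Unitary operator decomposition-lemma} with the non-standard axis pair $(\hat{n},\hat{m})$ generated by the words $T(HTH)$ and $HT(HTH)H$, and to reduce the whole claim to showing that the cyclic group generated by $R_{\hat{n}}(\theta)$ is dense in the one-parameter group of rotations about $\hat{n}$. First I would verify Eq. (\ref{HTHT-theorem-eq1}) by direct multiplication and compare against Eq. (\ref{R_n-eq}) to read off, up to the irrelevant global phase $e^{i\pi/4}$, that $T(HTH)$ equals $R_{\hat{n}}(\theta)$ with $\cos(\theta/2) = \cos^2(\pi/8)$ and the axis $\hat{n}$ exactly as stated. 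This step is routine algebra.

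The main obstacle is proving that $\theta/(2\pi)$ is irrational, since everything downstream relies on density. I would argue by contradiction: if $\theta = 2\pi p/q$ for integers $p,q$, then $2\cos(\theta/2) = 2\cos(\pi p/q) = \zeta + \zeta^{-1}$ with $\zeta = e^{i\pi p/q}$ a root of unity, hence is an algebraic integer. But direct computation yields $2\cos(\theta/2) = 2\cos^2(\pi/8) = 1 + \tfrac{\sqrt{2}}{2}$, whose minimal polynomial over $\mathbb{Q}$ is $x^2 - 2x + \tfrac{1}{2}$; this is not integral, so $1 + \tfrac{\sqrt{2}}{2}$ fails to be an algebraic integer, a contradiction. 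Hence $\theta/(2\pi) \notin \mathbb{Q}$, and by Weyl equidistribution (or a pigeonhole argument) the orbit $\{k\theta \bmod 2\pi : k \in \mathbb{Z}_{>0}\}$ is dense in $[0, 2\pi)$. Consequently, for every target angle $\varphi$ and every $\eta > 0$ there is a positive integer $k$ with $E(R_{\hat{n}}(\varphi),\, R_{\hat{n}}(\theta)^{k}) < \eta$.

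Next I would compute $H R_{\hat{n}}(\theta) H$ using the Pauli conjugation rules $HXH = Z$, $HZH = X$, $HYH = -Y$, which send a rotation about $(n_x, n_y, n_z)$ to a rotation about $(n_z, -n_y, n_x)$. For our $\hat{n}$, $n_x = n_z$, so this axis is precisely the $\hat{m}$ asserted in the theorem, and $n_y \neq 0$ guarantees that $\hat{n}$ and $\hat{m}$ are non-parallel real unit vectors. Lemma \ref{Unitary operator decomposition-lemma} then provides angles $\alpha, \beta, \gamma, \delta$ with $U = e^{i\alpha} R_{\hat{n}}(\beta) R_{\hat{m}}(\gamma) R_{\hat{n}}(\delta)$.

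Finally, given $\epsilon > 0$, I would use the density established above to pick positive integers $n_1, n_2, n_3$ so that $R_{\hat{n}}(\theta)^{n_j}$ approximates each of $R_{\hat{n}}(\beta), R_{\hat{n}}(\gamma), R_{\hat{n}}(\delta)$ to within $\epsilon/3$. Conjugating the middle factor by $H$'s turns it into an approximation of $R_{\hat{m}}(\gamma)$ without changing the error, since $H$ is unitary. The standard triangle-inequality bound $\|(ABC - A'B'C')|\psi\rangle\| \le \|(A-A')BC|\psi\rangle\| + \|A'(B-B')C|\psi\rangle\| + \|A'B'(C-C')|\psi\rangle\|$, combined with unitarity of the approximants, then yields Eq. (\ref{HTHT-theorem-eq2}) up to the global phase $e^{i\alpha}$ (which does not affect $E$). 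Expanding each occurrence of $R_{\hat{n}}(\theta)$ back into the word $T(HTH)$ delivers an explicit Hadamard-and-$T$ circuit that approximates $U$ within $\epsilon$.
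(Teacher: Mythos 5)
Your proof is correct and follows essentially the same route as the paper, which takes this theorem from Nielsen--Chuang: verify $T(HTH)=e^{i\pi/4}R_{\hat n}(\theta)$ directly, obtain the second axis $\hat m$ by $H$-conjugation, invoke the generalized Euler decomposition of Lemma \ref{Unitary operator decomposition-lemma}, and conclude via density of the powers of $R_{\hat n}(\theta)$ plus a triangle inequality. Your algebraic-integer argument that $\theta$ is an irrational multiple of $2\pi$ is a nice addition the paper merely asserts; the only slight imprecision is the parenthetical claim that the global phase ``does not affect $E$'' (it does, under Definition \ref{Error-Unitary operator}), but this matches the ``up to an unimportant global phase'' looseness already present in the theorem statement itself.
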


Based on Theorem \ref{HTHT-theorem}, we propose an algorithm for exactly computing the minimum resources required to approximate any single-qubit unitary operator $U$ with a given accuracy using $V=R_{\hat{n}}(\theta)^{n_1}HR_{\hat{n}}(\theta)^{n_2}H$ $R_{\hat{n}}(\theta)^{n_3}$. The algorithm is as follows:
\begin{algorithm}[H]
\caption{Compute the minimum resources for single-qubit unitary operators $U$}
\label{algorithm1}
\begin{algorithmic}[1]
\State \textbf{Input:} Rotation operators $R_{\hat{n}}(\theta),U$, integer $N_0$, accuracy $\epsilon$.
\State \textbf{Output:} Minimum resources $\min n_{sum}$ for single-qubit unitary operators $U$.
    \State Let $\min n_{sum}\xleftarrow{} \infty$;
    \For{$n_{sum}$ from 3 to $N_0$}
        \For{$n_1$ from 1 to $n_{sum}-2$}
            \For{$n_2$ from 1 to $n_{sum}-n_1-1$}
                \State compute $n_3 = n_{sum} - n_1 - n_2$;
                \State $V=R_{\hat{n}}(\theta)^{n_1}HR_{\hat{n}}(\theta)^{n_2}HR_{\hat{n}}(\theta)^{n_3}$;
                \State $E(U,V)=\max\limits_{|\psi\rangle}||(U-V)|\psi\rangle||$, for $|\psi\rangle$ in the Bloch sphere;
                \If{$E(U,V)< \epsilon$}
                    \State $\min n_{sum}=n_{sum}$;
                    \State out of all the loop;
                \EndIf
            \EndFor
        \EndFor
    \EndFor
    \State \Return $\min n_{sum}$.
\end{algorithmic}
\end{algorithm}

\begin{remark}\label{rm1}
In Algorithm \ref{algorithm1}, the minimum resources are measured by the minimum of $n_1+n_2+n_3$. The minimum T-count is twice $\min n_{sum}$, Since $R_{\hat{n}}(\theta)=THTH,R_{\hat{m}}(\theta)=HTHT$.
\end{remark}

Since a computer is a discrete computing device, it is impossible to traverse all $|\psi\rangle$ in the normalized space. To avoid this problem and ensure that our algorithm is exact, we combine numerical and analytical methods to perform a mathematical transformation in step 9 of Algorithm \ref{algorithm1}, transforming this traversal problem into an extremum determination problem. The analytical method can avoid traversing all quantum states in the normalized space, and the numerical method can be used to determine whether the error conditions are satisfied. When the accuracy of the extremum computed by computers is much smaller than $\epsilon$, step 10 will not result in any misjudgment. The transformation is as follows:
\begin{center}
\fbox{\begin{minipage}{1\textwidth}
\small
\textbf{Mathematical computation in step 9 of Algorithm \ref{algorithm1}}
\begin{enumerate}
    \item  Let $|\psi\rangle=\cos{\frac{\omega}{2}}|0\rangle+e^{i\varphi}\sin{\frac{\omega}{2}}|1\rangle, \omega\in [0,\pi],\varphi\in [0,2\pi)$, then $|\psi\rangle=\begin{bmatrix}
    \cos{\frac{\omega}{2}}\\
    e^{i\varphi}\sin{\frac{\omega}{2}}
    \end{bmatrix}$.
    
    \item Let $f(\omega, \varphi)=||(U-V)|\psi\rangle||$, where $f(\omega, \varphi)$ is a continuous function over the domain. Then, compute the norm of the two-dimensional vector $(U-V)|\psi\rangle$ for each set of $n_1,n_2,n_3$.

    \item Compute the partial derivatives $\frac{\partial f}{\partial \omega}$ and $\frac{\partial f}{\partial \varphi}$ of the function $f(\omega, \varphi)$ with respect to $\omega$ and $\varphi$ such that $\frac{\partial f}{\partial \omega}=0$ and $\frac{\partial f}{\partial \varphi}=0$.
    \begin{itemize}
        \item If $\frac{\partial f}{\partial \omega}=0$ has no solution, then $\frac{\partial f}{\partial \omega}>0$ (or$<0$) always holds, and the local maximum point is the boundary point $(\pi,\varphi_0)$ (or $(0,\varphi_0)$).
        \item If $\frac{\partial f}{\partial \varphi}=0$ has no solution, then $\frac{\partial f}{\partial \varphi}>0$ (or$<0$) always holds, and $f(\omega, 0)\neq f(\omega, 2\pi)$, which contradicts $f(\omega, 0)=f(\omega, 2\pi)$! Therefore, $\frac{\partial f}{\partial \varphi}=0$ must have a solution.
        \item If both $\frac{\partial f}{\partial \omega}=0$ and $\frac{\partial f}{\partial \varphi}=0$ have solutions, then the possible extremum points $(\omega_0,\varphi_0)$ are found.
    \end{itemize}
    
    \item Compute the second-order partial derivatives of the function $f$ at $(\omega_0,\varphi_0)$ and construct the Hessian matrix $D$ as follows:
    \begin{equation}\label{algorithm-eq}
        D=\begin{bmatrix}
        \frac{\partial^2 f}{\partial \omega^2} & \frac{\partial^2 f}{\partial \omega \partial \varphi}\\
        \frac{\partial^2 f}{\partial \varphi \partial \omega} & \frac{\partial^2 f}{\partial \varphi^2}
        \end{bmatrix}.
    \end{equation}
    
    \item Compute the determinant $\det(D)$ of the Hessian matrix $D$ and determine the extremum point $(\omega_0,\varphi_0)$:
    \begin{itemize}
        \item If $\det(D)>0$ and $\frac{\partial^2 f}{\partial \omega^2}<0$, then $(\omega_0,\varphi_0)$ is a local maximum point.
        \item If $\det(D)>0$ and $\frac{\partial^2 f}{\partial \omega^2}>0$, then $(\omega_0,\varphi_0)$ is a local minimum point.
        \item If $\det(D)<0$, then $(\omega_0,\varphi_0)$ is not an extremum point.
        \item If $\det(D)=0$, then $(\omega_0,\varphi_0)$ may be a local maximum or minimum point.
    \end{itemize}
    
    \item By comparing the function values at all possible local extremum points and boundary points, determine the global maximum $E(U,V)=\max (f(\omega,\varphi))$.
\end{enumerate}
\end{minipage}}
\end{center}

We provide a lemma concerning the property of Algorithm \ref{algorithm1} as follows:
\begin{lemma}
    \label{Algorithm properties-lemma}
    Given a unitary operator $U$ and a positive integer $N_0\geq 3$, if the minimum resources can be found within the search space $\mathcal{O}(N_0^3)$, then the execution process of Algorithm \ref{algorithm1} can be regarded as a function $h(\epsilon)$ related to the accuracy $\epsilon>0$. $h(\epsilon)$ is a monotonically decreasing function, and $h(\epsilon)\geq 3$ always holds. In particular, $h(\epsilon)=0$ if and only if $\epsilon=0$, and $U$ is a matrix representation of a combination of the Clifford and T gates, or the identity matrix.
\end{lemma}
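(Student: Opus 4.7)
The plan is to treat $h(\epsilon)$ as the value $\min n_{sum}$ returned by Algorithm \ref{algorithm1} on input $(U,\epsilon)$, extended by the convention that $h(\epsilon)=0$ in the degenerate case in which $U$ is already realised exactly by some finite product of Clifford and $T$ gates, so that no approximating word $V=R_{\hat n}(\theta)^{n_1}HR_{\hat n}(\theta)^{n_2}HR_{\hat n}(\theta)^{n_3}$ is required. The hypothesis that the search space $\mathcal O(N_0^3)$ already contains the optimum guarantees that $h(\epsilon)$ is finite and hence well-defined as a function of $\epsilon$. With this interpretation, each of the three asserted properties reduces to a structural inspection of the algorithm, and I would verify them in turn.

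For monotonicity, I would fix $\epsilon_1\le\epsilon_2$ and observe that the feasibility test $E(U,V)<\epsilon_1$ in line 10 of Algorithm \ref{algorithm1} strictly implies $E(U,V)<\epsilon_2$. Every triple $(n_1,n_2,n_3)$ admissible at accuracy $\epsilon_1$ is therefore admissible at accuracy $\epsilon_2$, so the feasible set can only grow as $\epsilon$ increases. Because the three nested loops enumerate $n_{sum}$ in increasing order and return the first admissible value, enlarging the feasible set can only leave the returned value unchanged or decrease it, which yields $h(\epsilon_2)\le h(\epsilon_1)$.

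The bound $h(\epsilon)\ge 3$ in the non-degenerate case follows directly from the loop structure: the outer loop initialises $n_{sum}$ at $3$, the first inner loop enforces $n_1\ge 1$, and the second inner loop enforces $n_2\ge 1$ together with the upper bound $n_2\le n_{sum}-n_1-1$, which forces $n_3=n_{sum}-n_1-n_2\ge 1$. Hence $n_1+n_2+n_3\ge 3$ for any triple examined by the search, and every value ever returned by the main loop is at least $3$.

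For the biconditional characterising $h(\epsilon)=0$, the ``if'' direction is immediate: when $U$ is (up to a global phase) a product of Clifford and $T$ gates, or the identity, we can take $V=U$ with no factors of $R_{\hat n}(\theta)$, the error vanishes, and the degenerate branch applies precisely at the boundary $\epsilon=0$. For the converse, $h(\epsilon)=0$ means by convention that no approximating word is used, so $U$ itself must coincide exactly with an element of the group generated by $\{H,T\}$; the value $0$ is reserved for strict equality, since any $\epsilon>0$ would already allow the main search to return an integer $\ge 3$. The main obstacle I anticipate is formalising this degenerate branch cleanly, because it is not literally present in the pseudocode of Algorithm \ref{algorithm1} and must be introduced as a trivial pre-check exploiting the fact that exact representability in $\{H,T\}$ picks out a discrete subgroup of $SU(2)$; once that convention is fixed, every remaining step is a direct inspection of the loop bounds and of the monotonicity of the error functional $E(U,\cdot)$.
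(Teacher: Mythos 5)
Your proposal is correct and follows essentially the same route as the paper: monotonicity from the fact that any word feasible at accuracy $\epsilon_1$ remains feasible at any larger $\epsilon_2$, the bound $h(\epsilon)\geq 3$ from the positivity of $n_1,n_2,n_3$ in the search, and the $h(\epsilon)=0$ case treated as the degenerate situation where $U$ needs no approximation because it is exactly a Clifford+T word or the identity. Your explicit convention for the degenerate branch is a slightly more careful rendering of the same informal step the paper takes, so there is no substantive difference.
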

\begin{proof}
    Let $\epsilon_2>\epsilon_1>0$, if $E\leq \epsilon_1$, then the minimum resources satisfying the condition is $h(\epsilon_1)$, which must satisfy the condition $E\leq \epsilon_2$. Thus, $h(\epsilon_1)\geq h(\epsilon_2)$. Conversely, if $E\leq \epsilon_2$, then the minimum resources satisfying the condition is $h(\epsilon_2)$, which does not necessarily satisfy the condition $E\leq \epsilon_1$. We need to continue to perform Algorithm \ref{algorithm1}, which ensures that $h(\epsilon_1)>h(\epsilon_2)$. Therefore, $h(\epsilon_1)\geq h(\epsilon_2)$, we conclude that $h(\epsilon)$ is a monotonically decreasing function.
    
    According to Theorem \ref{HTHT-theorem}, $n1,n2,n3$ are all positive integers, $h(\epsilon)\geq 3$ always holds. When $h(\epsilon)=0$, it means that $U$ does not need to be approximated, then $\epsilon=0$. Thus, $U$ is a matrix representation of a combination of the Clifford gate and the T gate, or the identity matrix.
\end{proof}

Based on Lemma \ref{Algorithm properties-lemma}, we propose a property theorem about approximating the decomposition of unitary operators around the $\hat{z}$ axis to analyze the minimum resources, as follows:
\begin{theorem}\label{Rz properties-theorem}
    $R_Z(\beta_0)$ is a known rotation operator, and the sum of the resources for approximating its decomposition into $n+1$ rotation operators $R_Z(\beta_0-\beta_1-\cdots-\beta_n)R_Z(\beta_1)\cdots R_Z(\beta_n)$ is minimum if and only if $\beta_1=\cdots=\beta_n=0$.
\end{theorem}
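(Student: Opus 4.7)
The plan is to combine the exact additivity of $R_Z$ rotations with the structural properties of the resource function $h$ from Lemma \ref{Algorithm properties-lemma}, namely that $h$ is monotonically decreasing and satisfies $h(\epsilon)\geq 3$ on every non-trivial single-qubit unitary (with $h=0$ only for the identity or an exact Clifford $+$ $T$ word). Since $R_Z(\alpha)R_Z(\beta)=R_Z(\alpha+\beta)$, the identity $R_Z(\beta_0-\beta_1-\cdots-\beta_n)R_Z(\beta_1)\cdots R_Z(\beta_n)=R_Z(\beta_0)$ holds algebraically for every choice of $\beta_1,\dots,\beta_n$, so the decomposition is exact and only the cost of approximating the individual factors depends on the $\beta_i$. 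Let $\epsilon_i$ denote the accuracy with which the $i$-th factor is approximated via the construction of Theorem \ref{HTHT-theorem}; the operator-norm triangle inequality, applied inductively to the product of unitaries, gives $E(R_Z(\beta_0),V)\leq\sum_{i=0}^{n}\epsilon_i$, so to guarantee overall accuracy $\epsilon$ I impose $\sum_{i=0}^n \epsilon_i\leq\epsilon$, with total cost $\sum_i h(\epsilon_i)$, where identity factors contribute $0$ and every other factor contributes at least $3$.

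For the ``if'' direction, setting $\beta_1=\cdots=\beta_n=0$ leaves only the first factor non-trivial, namely $R_Z(\beta_0)$; allocating the full budget $\epsilon$ to that single factor yields total cost $h(\epsilon)$, which is the claimed lower bound. For the converse, suppose some $\beta_j\neq 0$ with $j\geq 1$; then generically both $R_Z(\beta_j)$ and the modified first factor $R_Z(\beta_0-\sum_{k\geq 1}\beta_k)$ are non-trivial, giving at least two non-identity factors. Each is approximated with $\epsilon_i\leq\epsilon$, so by monotonicity of $h$ each contributes at least $h(\epsilon)$, and the sum is bounded below by $2h(\epsilon)>h(\epsilon)$, where strictness comes from $h(\epsilon)\geq 3>0$. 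This contradicts optimality and establishes the ``only if'' direction in the generic case.

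The main obstacle I expect is the degenerate edge case where $\sum_{j\geq 1}\beta_j\equiv\beta_0$, so that the first factor collapses to the identity and exactly one $R_Z(\beta_j)$ is non-trivial; in this scenario the sum of resources is again $h(\epsilon)$, merely tying with the canonical choice rather than strictly exceeding it. I would handle this by observing that such a decomposition is a positional relabeling of the optimum (the unique non-trivial factor still carries angle $\beta_0$), so that up to reindexing of identity slots the minimum is uniquely attained at $\beta_1=\cdots=\beta_n=0$. Everything else is a direct monotonicity-and-lower-bound argument on $h$, so the crux of the proof is precisely the assertion of Lemma \ref{Algorithm properties-lemma} that any genuine non-trivial approximation costs at least three resources, which ensures that splitting one $R_Z$ rotation into two or more non-trivial ones can never decrease the total count.
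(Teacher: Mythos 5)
Your setup is fine up to the cost accounting: the exact additivity $R_Z(\alpha)R_Z(\beta)=R_Z(\alpha+\beta)$, the triangle-inequality chaining $E(R_Z(\beta_0),V_0V_1\cdots V_n)\leq\sum_i\epsilon_i$, and the observation that identity factors cost nothing are all correct. The genuine gap is in the ``only if'' step, where you write that each non-trivial factor, being approximated to some $\epsilon_i\leq\epsilon$, ``contributes at least $h(\epsilon)$,'' so that two non-trivial factors cost at least $2h(\epsilon)>h(\epsilon)$. The resource function of Lemma \ref{Algorithm properties-lemma} is attached to a \emph{specific} unitary: each factor $R_Z(\beta_i)$ has its own function $h_i$, and monotonicity only yields $h_i(\epsilon_i)\geq h_i(\epsilon)\geq 3$ (or $0$ for exactly implementable factors), not $h_i(\epsilon_i)\geq h(\epsilon)$ where $h$ is the function for $R_Z(\beta_0)$. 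There is no a priori ordering between the cost of approximating $R_Z(\beta_j)$ and that of approximating $R_Z(\beta_0)$ at the same accuracy; indeed Table \ref{tab:1} shows the cost varies irregularly with the angle, and a factor such as $R_Z(\pi/4)$ (the T gate up to global phase) is exactly implementable at negligible cost in this framework. So your lower bound $2h(\epsilon)$ is unjustified, and what the theorem actually requires --- that \emph{any} splitting has total cost at least $h(\epsilon)$ --- is exactly the estimate your argument does not supply. This also leaves the ``if'' direction incomplete: showing that the choice $\beta_1=\cdots=\beta_n=0$ costs $h(\epsilon)$ identifies the value at the claimed optimum, but its minimality over all $(\beta_1,\dots,\beta_n)$ is the missing content.

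For comparison, the paper does not attempt a per-factor bound of this kind. It plays the decomposition against its inverse: Case 1 writes $R_Z(\beta_0)$ in terms of $R_Z(\beta_0-\sum_i\beta_i)$ and the $R_Z(\beta_i)$, Case 2 writes $R_Z(\beta_0-\sum_i\beta_i)$ in terms of $R_Z(\beta_0)$ and the $R_Z(-\beta_i)$, and a contradiction is derived from assuming the splitting is favorable in both directions, using only monotonicity of the operator-dependent functions $h,h_0,h_i,h_i'$. That exchange/symmetry argument never needs to compare the costs of different rotation angles at equal accuracy, which is precisely the comparison your proof relies on and cannot justify. If you want to salvage your route, you would need an additional argument (absent from both your proposal and Lemma \ref{Algorithm properties-lemma}) showing that the concatenated approximations of the factors can be converted into an admissible approximation of $R_Z(\beta_0)$ whose cost is counted by $h$, i.e.\ a genuine subadditivity statement for the operator-dependent resource functions.
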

\begin{proof}
    For any given set of $\beta_1,\cdots,\beta_n$ and accuracy $\epsilon$, we consider the following cases:
    
        \textbf{Case 1:} $R_Z(\beta_0)=R_Z(\beta_0-\beta_1-\cdots-\beta_n)R_Z(\beta_1)\cdots R_Z(\beta_n)$, let $E(R_Z(\beta_0),V)<\epsilon,E(R_Z(\beta_0-\beta_1-\cdots-\beta_n),V_0)<\epsilon_0,E(R_Z(\beta_1),V_1)<\epsilon_1,\cdots,E(R_Z(\beta_n),V_n)<\epsilon_n,\epsilon_0+\epsilon_1+\cdots+\epsilon_n=\epsilon$, corresponding to the functions $h(\epsilon),h_0(\epsilon_0),h_1(\epsilon_1),\cdots h_n(\epsilon_n)$, respectively. Here, $V,V_0,\cdots,V_n$ are the actual approximation operators of $R_Z(\beta_0), R_Z(\beta_0-\beta_1-\cdots-\beta_n),R_Z(\beta_1),\cdots,R_Z(\beta_n)$ respectively.
        
        \textbf{Case 2:} $R_Z(\beta_0-\beta_1-\cdots-\beta_n)=R_Z(\beta_0)R_Z(-\beta_1)\cdots R_Z(-\beta_n)$, let $E(R_Z(\beta_0-\beta_1-\cdots-\beta_n),V_0')<\epsilon,E(R_Z(\beta_0),V')<\epsilon',E(R_Z(-\beta_1),V_1')<\epsilon_1',\cdots, E(R_Z(-\beta_n),V_n')<\epsilon_n',\epsilon'+\epsilon_1'+\cdots+\epsilon_n'=\epsilon$, corresponding to the functions $h_0(\epsilon),h(\epsilon'),h_1'(\epsilon_1'),\cdots,h_n'(\epsilon_n')$, respectively. Here, $V',V_0',\cdots,V_n'$ are the actual approximation operators of $R_Z(\beta_0), R_Z(\beta_0-\beta_1-\cdots-\beta_n),R_Z(-\beta_1),\cdots,R_Z(-\beta_n)$ respectively.
    
    Case 1 and Case 2 can be transformed into each other and are equivalent, meaning that a known rotation operator around the $\hat{z}$ axis can be decomposed into a set of rotation operators around the $\hat{z}$ axis. Thus, $h(\epsilon)\leq h_0(\epsilon_0)+h_1(\epsilon_1)+\cdots +h_n(\epsilon_n)$ for Case 1, and $h_0(\epsilon) \leq h(\epsilon')+h_1'(\epsilon_1')+\cdots +h_n'(\epsilon_n')$ for Case 2; or $h(\epsilon)\geq h_0(\epsilon_0)+h_1(\epsilon_1)+\cdots +h_n(\epsilon_n)$ for Case 1, and $h_0(\epsilon) \geq h(\epsilon')+h_1'(\epsilon_1')+\cdots +h_n'(\epsilon_n')$ for Case 2. Here, $h, h_0,\cdots,h_n, h_1' \cdots,h_n'$ depend only on rotation operators. $V, V_0,\cdots, V_n, V', V_0' \cdots,V_n'$ depend on rotation operators and accuracy. Additionally, $h(\epsilon)$ is a constant for different sets of $\beta_1,\cdots,\beta_n$.
    
    Now using the proof by contradiction, suppose $h_0(\epsilon) \geq h(\epsilon')+h_1'(\epsilon_1')+\cdots +h_n'(\epsilon_n')$, then $h_0(\epsilon)\geq h(\epsilon)+h_1'(\epsilon)+\cdots +h_n'(\epsilon)$ according to Lemma \ref{Algorithm properties-lemma}. Consequently, we have $h_0(\epsilon_0)+h_1(\epsilon_1)+\cdots +h_n(\epsilon_n) \geq h_0(\epsilon)+h_1(\epsilon)+\cdots +h_n(\epsilon) \geq h(\epsilon)+h_1'(\epsilon)+\cdots +h_n'(\epsilon)+h_1(\epsilon)+\cdots +h_n(\epsilon) \geq h(\epsilon)$, contradiction! In conclusion, $h_0(\epsilon) \leq h(\epsilon')+h_1'(\epsilon_1')+\cdots +h_n'(\epsilon_n')$ and $h(\epsilon)\leq h_0(\epsilon_0)+h_1(\epsilon_1)+\cdots +h_n(\epsilon_n)$. That is, the resources for directly approximating $R_Z(\beta_0)$ are minimum, achieving $h(\epsilon)$, if and only if $\beta_1=\cdots=\beta_n=0$. 
\end{proof}

\subsection{Complexity analysis}
We now focus on the exact lower bound problem of the T-count for the QFT. We propose the following theorem, which states that a controlled single-qubit unitary operator satisfying certain conditions can be decomposed into one CNOT gate and two single-qubit gates that are conjugate transposes of each other.
\begin{theorem}\label{Controlled-$(UX)$ decomposition-theorem}
    A unitary operator controlled-$(UX)$ can be decomposed into one CNOT gate and two single-qubit gates that are conjugate transposes of each other, if and only if $U=e^{i\alpha}R_z(\beta)R_y(\gamma)R_z(\beta)$.
\end{theorem}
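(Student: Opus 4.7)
The plan is to exploit the characteristic property of the conjugated-CNOT shape: any circuit of the form $(I \otimes W) \cdot CNOT \cdot (I \otimes W^\dagger)$ acts as the identity on the $|0\rangle$-control branch and as $WXW^\dagger$ on the $|1\rangle$-control branch, so it realises controlled-$(UX)$ up to a diagonal phase on the control that can be absorbed separately. The theorem therefore reduces to the algebraic question: for which single-qubit $U$ does there exist a single-qubit $W$ with $WXW^\dagger = UX$ up to a scalar?

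For the \emph{only if} direction I would parametrise $W$ via Lemma \ref{Unitary operator decomposition-lemma} as $W = e^{i\phi} R_z(a) R_y(b) R_z(c)$ and push the central $X$ to the right using $X R_z(\theta) = R_z(-\theta) X$ and $X R_y(\theta) = R_y(-\theta) X$. A short manipulation yields
\begin{equation*}
W X W^\dagger \;=\; \bigl[R_z(a)\, R_y(b)\, R_z(2c)\, R_y(b)\, R_z(a)\bigr]\cdot X,
\end{equation*}
so $U$ must coincide, up to global phase, with the bracketed five-factor product.

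The core step, and the main obstacle in the proof, is to show that the inner YZY sandwich $R_y(b) R_z(2c) R_y(b)$ can be rewritten as a \emph{symmetric} ZYZ product $R_z(\mu) R_y(\gamma') R_z(\mu)$ with equal outer angles. I would establish this by expanding both sides in the Pauli basis via $R_z(\theta) = \cos(\theta/2) I - i\sin(\theta/2) Z$ and $R_y(\theta) = \cos(\theta/2) I - i\sin(\theta/2) Y$ and comparing coefficients. The crucial observation is that on the left-hand side the $X$-component vanishes by a direct cancellation of two cross terms, so the product lies in the real span of $\{I, Y, Z\}$; matching the surviving $I$, $Y$, $Z$ coefficients determines $\mu$ and $\gamma'$ explicitly via $\sin(\gamma'/2) = \sin b \cos c$ and $\sin \mu = \sin c / \cos(\gamma'/2)$, with consistency guaranteed by the trigonometric identity $\sin^2 c + \cos^2 b \cos^2 c = 1 - \sin^2 b \cos^2 c$. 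Absorbing the outer $R_z(a)$ factors then gives $U = R_z(a+\mu) R_y(\gamma') R_z(a+\mu)$ (up to global phase), exactly the claimed symmetric form with $\beta = a+\mu$ and $\gamma = \gamma'$.

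For the \emph{if} direction, given $U = e^{i\alpha} R_z(\beta) R_y(\gamma) R_z(\beta)$, I would exhibit the explicit witness $W = R_z(\beta) R_y(\gamma/2)$ (which is the $c = 0$ specialisation of the analysis above) and verify by the same commutation identities that $W X W^\dagger = R_z(\beta) X R_y(-\gamma) R_z(-\beta) = e^{-i\alpha} UX$. The leftover global factor $e^{-i\alpha}$ on the $|1\rangle$-branch is then compensated by a $\mathrm{diag}(1, e^{i\alpha})$ phase gate on the control, completing the decomposition with a single CNOT and the conjugate pair $W, W^\dagger$ on the target line.
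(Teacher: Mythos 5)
Your proof is correct, and on the necessity direction it actually goes further than the paper's own argument. The sufficiency halves coincide: the paper specializes Theorem \ref{Controlled-$U$ decomposition-theorem} to $A=R_z(\beta)R_y(\gamma/2)$, $B=A^{\dagger}$, $C=I$, which is exactly your witness $W$, and like you it silently absorbs the residual phase $e^{i\alpha}$ into a phase gate on the control. For necessity, however, the paper merely writes $UX=e^{i\alpha'/2}A'X(A')^{\dagger}$ and then asserts, citing Theorem \ref{Controlled-$U$ decomposition-theorem} and Eq.~(\ref{Unitary operator decomposition-lemma-eq1}), that $A'$ may be taken to be $R_z(\beta)R_y(\gamma/2)$ up to phase; it never explains why a general conjugating gate $W=e^{i\phi}R_z(a)R_y(b)R_z(c)$, with its trailing $z$-rotation, could not generate a $U$ outside the symmetric family. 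Your computation closes precisely that gap: pushing $X$ to the right gives $WXW^{\dagger}=R_z(a)R_y(b)R_z(2c)R_y(b)R_z(a)\,X$, and your key lemma that the palindromic sandwich $R_y(b)R_z(2c)R_y(b)$ has vanishing $X$-component, hence equals a symmetric $R_z(\mu)R_y(\gamma')R_z(\mu)$ with $\sin(\gamma'/2)=\sin b\cos c$, $\sin\mu\cos(\gamma'/2)=\sin c$, $\cos\mu\cos(\gamma'/2)=\cos b\cos c$ (consistent by $\sin^2 c+\cos^2 b\cos^2 c=1-\sin^2 b\cos^2 c$, which checks out), forces $U$ into the form $R_z(a+\mu)R_y(\gamma')R_z(a+\mu)$ up to global phase. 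The trade-off is clear: the paper's proof is shorter because it leans on the already-proved $ABC$ construction, while yours costs a Pauli-basis calculation but establishes necessity for an arbitrary conjugating gate rather than one presumed to be of the special two-factor form. Both arguments share the same unstated structural assumption that the conjugate pair sits on the target wire around the single CNOT, so you are not weaker than the paper on that point.
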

\begin{proof}
    \textbf{Necessary condition:} $UX$ can written as $e^{i\alpha'/2}A'X(A')^{\dagger}$, then $U=e^{i\alpha'/2}A'X(A')^{\dagger}X$. According to Theorem \ref{Controlled-$U$ decomposition-theorem} and Eq. (\ref{Unitary operator decomposition-lemma-eq1}), up to a global phase, $A'\equiv R_z(\beta)R_y(\gamma/2)$, $A'^{\dagger}\equiv R_y(-\gamma/2)R_z(-\beta)$. Therefore, $U=e^{i\alpha}R_z(\beta)R_y(\gamma)R_z(\beta)$.
     
    \textbf{Sufficient condition:}
    $U=e^{i\alpha}R_z(\beta)R_y(\gamma)R_z(\beta)$, according to Eq. (\ref{Unitary operator decomposition-lemma-eq1}), then $\delta=\beta$. According to Theorem \ref{Controlled-$U$ decomposition-theorem}, let $A=R_z(\beta)R_y(\gamma/2), B=R_y(-\gamma/2)R_z(-\beta), C=I$, then $UX=e^{i\alpha}AXB$, where $B=A^{\dagger}$. Therefore, controlled-$(UX)$ can be decomposed into one CNOT gate and two single-qubit gates that are mutually conjugate transpose.
\end{proof}

\begin{proposition}\label{proposition4}
The method for decomposing the ancilla-free controlled-$R_k$, as shown in Figure \ref{fig:citco}, requires the minimum T-count when $A=I$ or $C=I$. The minimum T-count are the sum of those for approximating $R_z(-\pi /2^k)$ and $R_z(\pi /2^k)$.
\end{proposition}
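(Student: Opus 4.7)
The strategy is to apply Theorem~\ref{Controlled-$U$ decomposition-theorem} to the specific unitary $U = R_k$, exploit the fact that $R_k$ is diagonal to pin down almost all of the free angles, and then invoke Theorem~\ref{Rz properties-theorem} on the single residual degree of freedom.

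First I would observe that since $R_k = \mathrm{diag}(1, e^{2\pi i/2^k})$ is diagonal, the $R_y$ factor in the decomposition provided by Lemma~\ref{Unitary operator decomposition-lemma} must trivialize. Writing $R_k = e^{i\pi/2^k} R_z(\pi/2^{k-1})$ forces $\gamma = 0$, $\alpha = \pi/2^k$, and $\beta + \delta = \pi/2^{k-1}$. Substituting $\gamma = 0$ into the formulas of Theorem~\ref{Controlled-$U$ decomposition-theorem} yields $A = R_z(\beta)$, $B = R_z(-(\beta+\delta)/2) = R_z(-\pi/2^k)$, $C = R_z((\delta-\beta)/2) = R_z(\pi/2^k - \beta)$, and $P = e^{i\alpha/2}R_z(\alpha) \equiv R_z(\pi/2^k)$ up to an unimportant global phase. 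So the control-side factor $P$ and the middle target-side factor $B$ are rigidly determined by $k$, whereas $A$ and $C$ depend only on the single free parameter $\beta$ and always satisfy $A \cdot C = R_z(\pi/2^k)$.

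Next, because $A$ and $C$ together reconstruct the \emph{same} fixed rotation $R_z(\pi/2^k)$ regardless of $\beta$, minimizing the total T-cost of the decomposition reduces to choosing the factorization $R_z(\pi/2^k) = R_z(\pi/2^k - \beta) \cdot R_z(\beta)$ whose two pieces are cheapest to approximate jointly. This is exactly the situation covered by Theorem~\ref{Rz properties-theorem} with $\beta_0 = \pi/2^k$ and $n = 1$, whose conclusion is that the sum of the approximation resources is minimized precisely when the extra angle vanishes. Hence the optimum is attained at $\beta = 0$ (giving $A = I$, $C = R_z(\pi/2^k)$) or at $\beta = \pi/2^k$ (giving $A = R_z(\pi/2^k)$, $C = I$), which establishes the first sentence of the proposition.

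Finally, I would tally the T-count at the optimum: the only single-qubit gates that must be approximated are $P \equiv R_z(\pi/2^k)$ on the control wire, $B = R_z(-\pi/2^k)$ on the target wire, and the unique non-identity member of $\{A,C\}$, which is again $R_z(\pi/2^k)$. Summing the contributions then expresses the minimum T-count entirely in terms of the T-costs of approximating $R_z(\pi/2^k)$ and $R_z(-\pi/2^k)$, as claimed. The main obstacle I expect is justifying that no other choice of $\beta$ can do strictly better even when the error budget $\epsilon$ is distributed unequally across $A$ and $C$; handling this is precisely the purpose of the contradiction argument inside Theorem~\ref{Rz properties-theorem}, which I would cite in full rather than re-derive, since that lemma was engineered to rule out exactly such compensating redistributions.
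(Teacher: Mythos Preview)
Your optimization within the two-CNOT template of Figure~\ref{fig:citco} is correct and matches the paper's final step: once $\gamma=0$ forces $A,B,C$ to be $Z$-rotations with $B=R_z(-\pi/2^k)$ fixed and $AC=R_z(\pi/2^k)$, Theorem~\ref{Rz properties-theorem} with $n=1$ indeed pins the minimum at $\beta\in\{0,\pi/2^k\}$.

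The gap is one of scope. The paper does not assume the Figure~\ref{fig:citco} form at the outset; it starts from a \emph{general} ancilla-free decomposition with $r$ CNOT gates and target-wire single-qubit gates $U_1,\dots,U_{r+1}$ (Figure~\ref{fig:adocu}). It first rules out odd $r$ by invoking Theorem~\ref{Controlled-$(UX)$ decomposition-theorem} (showing $R_kX$ has the required $\beta=\delta$ structure only when $k=1$), and for even $r$ derives the linear constraints $\sum_i\theta_i=0$ and $\sum_i(-1)^{i+1}\theta_i=\beta+\delta$ on the rotation angles. Theorem~\ref{Rz properties-theorem} is then applied to the \emph{two} alternating subsequences $\{\theta_{2\ell-1}\}$ and $\{\theta_{2\ell}\}$ to force all but one angle in each to vanish, which is what drives $r$ down to $2$. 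Your argument skips this entire reduction: by instantiating Theorem~\ref{Controlled-$U$ decomposition-theorem} directly you have only shown that $A=I$ or $C=I$ is optimal \emph{among decompositions already of the Figure~\ref{fig:citco} shape}, not among all ancilla-free decompositions. Since the proposition is later used to justify an exact lower bound, the stronger statement is the one actually needed.

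A secondary discrepancy: your final tally counts the control-wire gate $P\equiv R_z(\pi/2^k)$, giving two copies of $R_z(\pi/2^k)$ plus one $R_z(-\pi/2^k)$. The paper's proof optimizes only over the target-wire gates (since $P$ is fixed by $\alpha$ and independent of the decomposition choice), and its stated minimum---reflected in Table~\ref{tab:1}---is the sum for \emph{one} $R_z(\pi/2^k)$ and one $R_z(-\pi/2^k)$.
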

\begin{proof}
    Suppose that there exists a decomposition method, as shown in Figure \ref{fig:adocu}. Here, $U=R_k$, $r$ is the number of CNOT gates, and $P$ is the phase shift gate.
    \begin{figure}[H]
    \centerline{\includegraphics[width=0.75\textwidth]{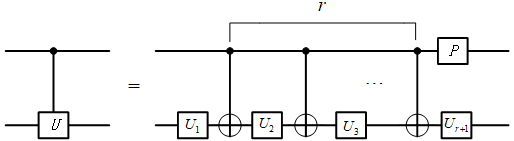}}
    \caption{General decomposition of an ancilla-free controlled-$U$ gate}
    \label{fig:adocu}
    \end{figure}
    When $r$ is an odd number, let $U_3=\cdots=U_{r+1}=I$, then $R_k=U_1XU_2X$. Therefore, $R_k=U'X$ can be decomposed into one CNOT and two single-qubit gates $U_1,U_2$ that are mutually conjugate transpose, where $U'=U_1XU_2$. Combining Eq. (\ref{Unitary operator decomposition-lemma-eq1}) and Eq. (\ref{controlled-Rk-eq}), we can obtain
    \begin{equation}\label{proposition3-eq}
        R_k=\begin{bmatrix} 
             1&0 \\ 
             0&e^{2\pi i/2^k} \end{bmatrix}=\begin{bmatrix}
        e^{i(\alpha-\beta/2-\delta/2)}\cos{\frac{\gamma}{2}} & -e^{i(\alpha-\beta/2+\delta/2)}\sin{\frac{\gamma}{2}}\\
        e^{i(\alpha+\beta/2-\delta/2)}\sin{\frac{\gamma}{2}} & e^{i(\alpha+\beta/2+\delta/2)}\cos{\frac{\gamma}{2}}
        \end{bmatrix}.
    \end{equation}
    Thus, we have $\alpha=\pi /2^k, \beta+\delta=\pi/2^{k-1}, \gamma=0$, then $R_k=e^{i\pi/2^k}R_z(\pi/2^{k-1})$. According to Theorem \ref{Controlled-$(UX)$ decomposition-theorem}, $R_kX=e^{i\pi/2^k}R_z(\pi/2^{k-1})R_y(\pi)R_z(-\pi)$ is satisfied if and only if $k=1$, i.e., $R_k=Z$. 
    
    Therefore, $r$ can only be an even number for $k=2,\cdots,n$. $U_1,\cdots,U_{r+1}$ are the operators $R_z(\theta_1),\cdots,$ $R_z(\theta_{r+1})$ that rotate around the $\hat{z}$ axis, then
    \begin{equation}
    \left\{ 
    \begin{aligned}
        &U_1U_2\cdots U_{r+1}=I \Longleftrightarrow \theta_1+\theta_2+\cdots+\theta_r+\theta_{r+1}=0,\\
        &U_1XU_2X\cdots XU_rXU_{r+1}=R_z(\beta+\delta) \Longleftrightarrow \theta_1-\theta_2+\cdots-\theta_r+\theta_{r+1}=\beta+\delta,
    \end{aligned}
    \right.
    \end{equation}
thus, $\theta_1+\theta_3+\cdots+\theta_{r+1}=\frac{\beta+\delta}{2}$, $\theta_2+\theta_4+\cdots+\theta_r=-\frac{\beta+\delta}{2}$. According to Theorem \ref{Rz properties-theorem}, if and only if there exist a certain $\theta_{i}=\frac{\beta+\delta}{2}$ for $i=2\ell-1$, where $\ell=1,\cdots,\frac{r}{2}+1$, a certain $\theta_{j}=-\frac{\beta+\delta}{2}$ for $j=2\ell$, where $\ell=1,\cdots,\frac{r}{2}$, the resources reach their minimum. At this point, $r=2$, this decomposition method corresponds to the one shown in Figure \ref{fig:citco}, where $A=I$ or $C=I$. From Eq. (\ref{proposition3-eq}), $A_k=R_z(\beta), B_k=R_z(-\pi /2^k), C_k=R_z(\pi /2^k-\beta), \beta\in[0,2\pi)$ for each $k=2,\cdots,n$. Let $\beta=0, \text{ or } \pi/2^k$, $A_K=I,C_k=R_z(\pi /2^k)$, or $C_K=I,A_k=R_z(\pi /2^k)$, then the resources reach their minimum, i.e., the sum of those for approximating $R_z(-\pi /2^k)$ and $R_z(\pi /2^k)$. 
\end{proof}

When $k$ is fixed (the controlled-$R_k$ is determined), then $B_k$ is determined, while $A_k,C_k$ are not unique and are related to $\beta$. To obtain the minimum resources, it is theoretically necessary to traverse $\beta\in (0,2\pi)$ and then compute the sum of the minimum resources $\min n_{(A_k+C_k)}$ for approximating $A_k,C_k$.
However, it is interesting that according to Proposition \ref{proposition4}, the minimum sum of the resources for approximating $A_k,C_k$ and $B_k$ is equivalent to using Algorithm \ref{algorithm1} to compute the sum of those for approximating $R_z(\pi /2^k)$ and $R_z(-\pi /2^k)$. According to Remark \ref{rm1}, it follows that the minimum T-count is twice their resources. The exact lower bound problem of the T-count for the QFT can be considered as the problem of
determining an integer $N_0$ such that the minimum T-count for approximating $R_z(\pi /2^k)$ or $R_z(-\pi/2^k)$ with given accuracy $\epsilon$ can be found in the search space $\mathcal{O}(N_0^3)$, for all $k=3,\cdots,n$. We call the problem $\text{ELBP}_{\text{T-count}}$. 

To show the NP-completeness of $\text{ELBP}_{\text{T-count}}$, we recall the K-SAT problem (($K\geq 3$)) \cite{karp2010reducibility}. Given a finite set of boolean variables $X=\{x_1,x_2,\cdots,x_n\},|X|=n$, a set of clauses $C=\{C_1,C_2,\cdots,C_m\},|C|=m,C=C_1 \land C_2 \land \cdots \land C_m$, each $C_i$ is a disjunctive normal form consisting of $K$ variables. Consider whether there exists a truth value assignment for a set of Boolean variables such that $C$ is true. It is known that the K-SAT problem is a class of NP-complete problems \cite{cook2023complexity,karp2010reducibility}, hence we can use a reduction from K-SAT problem to prove NP-completeness of $\text{ELBP}_{\text{T-count}}$. We propose the following theorem:

\begin{theorem}\label{Reduction-theorem}
     $\text{ELBP}_{\text{T-count}}$ is NP-complete. 
\end{theorem}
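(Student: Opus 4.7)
The plan is to establish NP-completeness in the standard two-step manner: first show that $\text{ELBP}_{\text{T-count}}$ belongs to NP, and then give a polynomial-time reduction from K-SAT, which is already known to be NP-complete.

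For NP membership, I would take as a certificate the integer $N_0$ together with, for every $k=3,\ldots,n$, a triple $(n_1^{(k)},n_2^{(k)},n_3^{(k)})$ with $n_1^{(k)}+n_2^{(k)}+n_3^{(k)}\leq N_0$. The verifier then forms $V_k=R_{\hat{n}}(\theta)^{n_1^{(k)}}HR_{\hat{n}}(\theta)^{n_2^{(k)}}HR_{\hat{n}}(\theta)^{n_3^{(k)}}$ and evaluates $E(R_z(\pm\pi/2^k),V_k)$ using the analytical procedure that accompanies Algorithm~\ref{algorithm1}: set up $f(\omega,\varphi)$, solve the stationary-point equations $\partial_\omega f=\partial_\varphi f=0$, classify the critical points by the Hessian $D$, and compare with the boundary values. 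Each of these subroutines involves $2\times 2$ matrix multiplication and solving low-degree transcendental equations to a fixed numerical precision, so verification runs in time polynomial in the input size, placing $\text{ELBP}_{\text{T-count}}$ in NP.

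For NP-hardness, the idea is to encode a K-SAT instance $\phi$ with variables $x_1,\ldots,x_N$ and clauses $C_1,\ldots,C_M$ into a collection of target rotations $R_z(\pi/2^{k_j})$ indexed by the clauses, together with a carefully chosen accuracy $\epsilon(\phi)$ and bound $N_0(\phi)$. For each variable $x_i$ I would allocate one of the three exponent slots in the approximating word $R_{\hat n}(\theta)^{n_1}HR_{\hat n}(\theta)^{n_2}HR_{\hat n}(\theta)^{n_3}$ (extending the normal form in Theorem~\ref{HTHT-theorem} to longer words when $N>3$), identifying the parity of the exponent with the truth value of $x_i$. For each clause $C_j$ I would select an angle $\theta_j=\pi/2^{k_j}$ and a threshold $\epsilon_j$ tailored so that, by Theorem~\ref{Rz properties-theorem} and Proposition~\ref{proposition4}, the approximation error drops below $\epsilon_j$ in the search space of size $\mathcal O(N_0^3)$ if and only if at least one literal in $C_j$ is set to true. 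The conjunction of all $M$ clauses is then enforced simultaneously by requiring that the joint T-count over all $k_j$ stay within the prescribed bound. Satisfying assignments of $\phi$ are in bijection with configurations of $(n_1^{(k_j)},n_2^{(k_j)},n_3^{(k_j)})$ meeting every clause's threshold, so $\phi\in\text{K-SAT}$ iff the constructed instance of $\text{ELBP}_{\text{T-count}}$ answers yes. The construction of $(\theta_j,\epsilon_j,N_0(\phi))$ from $\phi$ uses only arithmetic on the input, giving a polynomial-time many-one reduction.

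The principal obstacle I expect is the clause gadget: tuning the thresholds $\epsilon_j$ and the decomposition depths so that the monotonicity result of Lemma~\ref{Algorithm properties-lemma} together with the minimum-resource characterization of Theorem~\ref{Rz properties-theorem} enforces a sharp, gap-producing correspondence between literal choices and admissible $(n_1,n_2,n_3)$ triples. Concretely, I need that crossing a clause's satisfiability boundary forces $n_{\text{sum}}$ to jump above $N_0$, without spurious shortcuts arising from the noncommutativity of $H$ and $R_{\hat n}(\theta)$. This is where the analytic control provided by the Hessian analysis in step~9 of Algorithm~\ref{algorithm1} becomes essential, because it rules out accidental low-error words outside the intended gadget structure. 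Once this gap is secured, combining it with the NP membership above yields NP-completeness.
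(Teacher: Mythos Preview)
Your NP-membership argument is in line with the paper's: a short certificate consisting of one triple $(n_1^{(k)},n_2^{(k)},n_3^{(k)})$ per $k$, verified by the analytic error computation of Algorithm~\ref{algorithm1}, is exactly what the paper invokes when it says step~10 ``can be efficiently verified.''

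The NP-hardness reduction, however, is both different from the paper's and genuinely incomplete. The paper does \emph{not} encode Boolean variables as parities of the exponents $n_1,n_2,n_3$; instead it chooses the clause width $K$ to equal the size $\tfrac{1}{6}N_1(N_1-1)(N_1-2)$ of the search space itself, associates the variable $\ell_{kj}$ directly with the $j$-th candidate word in the search for $R_z(\pm\pi/2^k)$, and reinterprets $\lor$ and $\land$ as $\min$ and $\max$ over the recorded errors $E_{kj}$ (with $E_{kj}=\infty$ when the corresponding literal is false). Satisfiability of the specially structured formula $(\ell_{k1}\lor\cdots\lor\ell_{kK})\land(\neg\ell_{k1}\lor\cdots\lor\neg\ell_{kK})$ for all $k$ then corresponds to $\epsilon'\neq\infty$, i.e.\ to the existence of a finite maximum error across all rotations, and comparing $\epsilon'$ with the target $\epsilon$ decides whether $N_0=N_1$ suffices. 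So the paper's gadget is a direct identification ``variable $\leftrightarrow$ search-space index,'' not an arithmetic encoding into the exponents.

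Your proposed gadget has a real gap that you yourself flag as the ``principal obstacle'': there is no argument that the error $E\bigl(R_z(\pi/2^{k_j}),\,R_{\hat n}(\theta)^{n_1}HR_{\hat n}(\theta)^{n_2}HR_{\hat n}(\theta)^{n_3}\bigr)$ depends on the \emph{parities} of $n_1,n_2,n_3$ in a way that can be tuned to simulate an arbitrary disjunction of literals. Theorem~\ref{Rz properties-theorem} and Lemma~\ref{Algorithm properties-lemma} give monotonicity in $\epsilon$ and a minimality statement for $R_z$-products, but neither yields the sharp parity-controlled gap you need. Moreover, your proposed extension ``to longer words when $N>3$'' alters the very search space that defines $\text{ELBP}_{\text{T-count}}$, so the reduction would no longer target the stated problem. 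Without a concrete construction of the clause gadget and a proof of the required error gap, the hardness direction does not go through; the paper sidesteps this entirely by making the variables index the search space rather than live inside the exponents.
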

\begin{proof}
    Given an integer $N_0\geq 3$, the search space is $\sum\limits_{n_{sum}=3}^{N_0}\sum\limits_{n_1=1}^{n_{sum}-2}(n_{sum}-n_1-1)=\frac{1}{6}N_0(N_0-1)(N_0-2)=\mathcal{O}(N_0^3)$ according to steps 4-6 in Algorithm \ref{algorithm1}. As shown in Theorem \ref{sk-theorem}, $N_0=\mathcal{O}(\log^{c}(1/\epsilon))$, so the search space is logarithmic polynomial.
    
    Given an accuracy $\epsilon$ and a positive integer $N_1=\mathcal{O}(\log^{c}(1/\epsilon))$, let $K=\frac{1}{6}N_1(N_1-1)(N_1-2)=\mathcal{O}(N_1^3)$. An instance of the K-SAT problem is $(\ell_{31}\lor \ell_{32} \lor \cdots \lor \ell_{3K})\land (\neg\ell_{31}\lor \neg\ell_{32} \lor \cdots \lor \neg\ell_{3K})\land \cdots \land (\ell_{n1}\lor \ell_{n2} \lor \cdots \lor \ell_{nK})\land (\neg\ell_{n1}\lor \neg\ell_{n2} \lor \cdots \lor \neg\ell_{nK})$, we construct an instance of $\text{ELBP}_{\text{T-count}}$, where a finite set of boolean variables $L=\{\ell_{31},\ell_{32},\cdots,\ell_{3K},\cdots, \ell_{n1},\ell_{n2},\cdots,\ell_{nK}\}$, $|L|=(n-2)K$, a set of clauses $C=\{C_{31},C_{32},C_{41},C_{42},\cdots,C_{n1},C_{n2}\}$, $|C|=2(n-2)$, $C=C_{31} \land C_{32}\land C_{41}\land C_{42} \land \cdots \land C_{n1}\land C_{n2}$, each $C_{k1}$ or $C_{k2}$, $k=3,\cdots,n$ is a disjunctive normal form consisting of $K$ variables. Here, $\lor$ means taking the minimum value "$\min$", and $\land$ means taking the maximum value "$\max$". 
    
    If the variable $\ell_{kj}$ or $\neg\ell_{kj}$ is true for $k=3,\cdots,n,j=1,\cdots,K$, then the error $E_{kj}$ is recorded, otherwise it is assigned $\infty$ and recorded, where $E_{kj}$ for a certain $k$ is the error for a certain $R_z(\pi /2^k)$ or $R_z(-\pi /2^k)$  and differnet $j$ corresponds to different $V$ in the search space $\mathcal{O}(N_1^3)$ using Algorithm \ref{algorithm1}. Let $\epsilon'=\max (E_{31},E_{32},\cdots,E_{3K},\cdots,E_{n1},E_{n2},\cdots,E_{nK})$. If each clause is true, then at least one variable in each clause is true and $\epsilon' \neq \infty$; Otherwise, at least one clause is false, and then $\epsilon'=\infty$. Therefore, the boolean formula is satisfiable if and only if $\epsilon' \neq \infty$. This means that the minimum T-count for approximating $R_z(\pi /2^k)$ or $R_z(-\pi/2^k)$ with the accuracy $\epsilon'$ can be found in the search space $\mathcal{O}(N_1^3)$ using Algorithm \ref{algorithm1}, for all $k=3,\cdots,n$. Determine $N_0=N_1$ when $\epsilon' < \epsilon$, then 
    the minimum T-count for approximating $R_z(\pi /2^k)$ or $R_z(-\pi/2^k)$ can be found in the search space $\mathcal{O}(N_0^3)$; otherwise, it cannot.
    
     Clearly $\text{ELBP}_{\text{T-count}}$ is in NP, as the mathematical computation in step 9 of Algorithm \ref{algorithm1} is polynomial-time computable when given an $N_0$, and hence step 10 can be efficiently verified. It suffices to show NP-hardness by reducing the K-SAT problem to $\text{ELBP}_{\text{T-count}}$, and hence $\text{ELBP}_{\text{T-count}}$ is NP-complete.
\end{proof}

\section{The exact lower bound of CNOT-count for the fault-tolerant QFT}
From section 3, we known that $\text{ELBP}_{\text{T-count}}$ is an NP-complete problem and the fault-tolerant CNOT-count is positively correlated with the T-count. Therefore, the exact lower bound problem of CNOT-count for the fault-tolerant QFT is also NP-complete. Nevertheless, we can still compute its exact lower bound under partial fault-tolerant accuracy. We now turn to the transversal implementation of universal quantum gates with the minimum CNOT-count.
\subsection{Universal fault-tolerant quantum gates with the minimum CNOT-count}
The universal gates can be implemented transversely without requiring a fault-tolerant construction in FTQC except for the T gate equivalent to the rotation operator $R_z(\pi/4)$, so we need to consider the fault-tolerant construction of Z-rotation gates. Similar to the fault-tolerant construction of the T gate in \cite{nielsen2010quantum}, we provide a general fault-tolerant construction for $U=e^{i\theta/2}R_z(\theta)$ that cannot be implemented transversely without requiring a fault-tolerant construction, as shown in Figure \ref{fig:coftsqqg}.

\begin{figure}[H]
\centerline{\includegraphics[width=0.53\textwidth]{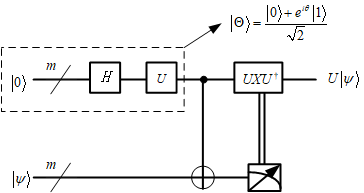}}
  \caption{Construction of fault-tolerant single-qubit quantum gate}
   \label{fig:coftsqqg}
\end{figure}
\noindent Here, the auxiliary state $|\Theta\rangle$ is a +1 eigenstate of the operator $UXU^{\dagger}=R_z(2\theta)X$.

Let $M=e^{i\theta}R_z(2\theta)$, we observe the single-qubit operator $MX$ with eigenvalue $\pm 1$, and its fault-tolerant measurement is shown in Figure \ref{fig:ftmoaom}.
\begin{figure}[H]
\centerline{\includegraphics[width=0.74\textwidth]{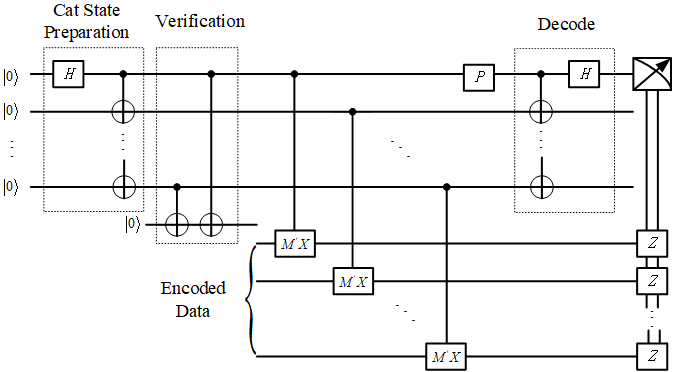}}
  \caption{Fault-tolerant measurement of an encoded observable $M$}
   \label{fig:ftmoaom}
\end{figure}
\noindent Here, $P$ is a phase shift gate, $P=e^{-i\theta/2}R_z(-\theta)$, and $M'$ is the fault-tolerant operator that can be implemented transversely for $M$ without requiring a fault-tolerant construction. The framed parts are the preparation, verification, and final decoding of the cat state $|Cat\rangle=\frac{1}{\sqrt{2}}(|0_C\rangle+|1_C\rangle)$. If the cat state is successfully prepared, then

\begin{equation}\label{Fault-tolerant measurement-eq}
    \begin{aligned}
   |Cat\rangle|0_L\rangle  
    &\xrightarrow{C(M'X)}\frac{1}{\sqrt{2}}(|0_C\rangle|0_L\rangle+e^{2i\theta}|1_C\rangle|1_L\rangle)\\
    &\xrightarrow{P}\frac{1}{\sqrt{2}}(|0_C\rangle|0_L\rangle+e^{i\theta}|1_C\rangle|1_L\rangle)\\
    &= \frac{1}{\sqrt{2}}\left( \frac{|0_C\rangle+|1_C\rangle}{\sqrt{2}}\frac{|0_L\rangle+e^{i\theta}|1_L\rangle}{\sqrt{2}}+ \frac{|0_C\rangle-|1_C\rangle}{\sqrt{2}}\frac{|0_L\rangle-e^{i\theta}|1_L\rangle}{\sqrt{2}} \right)\\
     &\xrightarrow{Decode}\frac{1}{\sqrt{2}}\left( |0\rangle\frac{|0_L\rangle+e^{i\theta}|1_L\rangle}{\sqrt{2}}+ |1\rangle\frac{|0_L\rangle-e^{i\theta}|1_L\rangle}{\sqrt{2}} \right),
    \end{aligned}
\end{equation}
where $|0_L\rangle$ and $|1_L\rangle$ denote the encoding states of logical $|0\rangle$ and $|1\rangle$, respectively.

We use the above fault-tolerant measurement method to prepare the auxiliary state $|\Theta\rangle$. If the measurement result is $+1$, it can be considered to have been prepared correctly; if it is $-1$, a fault-tolerant $Z$ operation needs to be applied to change the state.

Next, we propose a proposition regarding the minimum CNOT-count used in the general fault-tolerant construction in Figure \ref{fig:coftsqqg} and Figure \ref{fig:ftmoaom}. The proposition is as follows:
\begin{proposition}\label{proposition3}
The fault-tolerant construction of the single-qubit gate $U$ rotating around the $\hat{z}$ axis, as shown in Figure \ref{fig:coftsqqg}, requires the minimum CNOT-count. The preparation of the auxiliary state, as shown in Figure \ref{fig:ftmoaom}, also requires the minimum CNOT-count.
\end{proposition}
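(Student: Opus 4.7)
The plan is to prove the two claims of Proposition \ref{proposition3} separately: first for the gate construction in Figure \ref{fig:coftsqqg}, then for the auxiliary-state preparation in Figure \ref{fig:ftmoaom}. In each case the strategy is to (a) read off the CNOT-count of the given circuit and (b) match it with a lower bound that applies to any admissible fault-tolerant implementation of the same logical channel.

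For Part 1, I would first count: after $|\Theta\rangle$ is prepared, the gadget of Figure \ref{fig:coftsqqg} uses a single transversal CNOT between the auxiliary register and the data register, followed by transversal single-qubit operations and a destructive measurement. For an $[[n,1,d]]$ stabilizer code this contributes $n$ physical inter-register CNOTs, and the Clifford correction triggered by the measurement outcome is transversal and hence adds no further inter-register CNOTs. For the lower bound I would argue that since $U=e^{i\theta/2}R_z(\theta)$ is not transversal on the code, any fault-tolerant implementation must couple an ancillary register carrying the rotation information to the data register, and such coupling needs at least one transversal CNOT. I would formalise this by casting an arbitrary admissible gadget as a controlled-$(UX)$ pattern between the data register and an ancillary register and then invoking Theorem \ref{Controlled-$(UX)$ decomposition-theorem} together with Proposition \ref{proposition4}, which already pin down the minimum CNOT-count for such a decomposition.

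For Part 2, the preparation of $|\Theta\rangle$ in Figure \ref{fig:ftmoaom} decomposes into (i) preparing the $n$-qubit cat state from $|+\rangle\otimes|0\rangle^{\otimes(n-1)}$, (ii) verifying the cat state, (iii) the transversal controlled-$M'X$ between the cat register and the data register, (iv) the transversal phase gate $P$, and (v) decoding the cat state as in Eq. (\ref{Fault-tolerant measurement-eq}). I would bound each component from below: the cat state is GHZ-type, so creating it from a product state requires at least $n-1$ CNOTs since each CNOT entangles at most one previously unentangled qubit with the growing support; the transversal controlled-$M'X$ necessarily contributes $n$ inter-register CNOTs, because any reduction would break transversality and with it the fault-tolerance property motivating the gadget; and the decoding step is forced by inverting the cat-state creation. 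Summing these lower bounds recovers exactly the CNOT-count of Figure \ref{fig:ftmoaom}.

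The main obstacle will be the lower bound in Part 1, where one must rule out alternative fault-tolerant gadgets that might use a different ancilla protocol with fewer inter-register CNOTs, for instance by splitting $|\Theta\rangle$ into several smaller ancilla blocks or by routing the rotation through adaptive stabilizer measurements. My plan is to reduce any such gadget to teleportation-style form by tracking the ancilla's role in carrying the $R_z(\theta)$ phase and then invoke the decomposition results cited above. A secondary obstacle in Part 2 is bounding the verification CNOTs tightly: there I would use that detecting correlated bit-flips on two cat-state qubits requires an entangling gate between them, and that the standard linear-nearest-neighbour verification pattern attains that bound.
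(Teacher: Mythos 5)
Your plan follows essentially the same route as the paper's proof: count the CNOTs in each component of Figures \ref{fig:coftsqqg} and \ref{fig:ftmoaom} (the one transversal CNOT block of $m$ physical CNOTs coupling ancilla and data, the $m-1$ CNOTs of the cat-state preparation, and the $m$ transversal CNOTs of the controlled-$M'X$ justified via Theorem \ref{Controlled-$(UX)$ decomposition-theorem}), and argue minimality from the fact that a two-qubit interaction cannot be produced by single-qubit gates alone. The stronger universal lower bound over all alternative ancilla protocols that you identify as the main obstacle is not attempted in the paper, whose argument is purely component-wise, so at the paper's level of rigor your outline already suffices.
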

\begin{proof}
For $U=e^{i\theta/2}R_z(\theta)$ that cannot be implemented transversally without requiring a fault-tolerant construction, auxiliary qubits are required. We use only $m$ CNOT gates for transversal implementation to swap the auxiliary qubit $|0\rangle$ with the data qubit $|\psi\rangle$ when one logical qubit is encoded into $m$ physical qubits, as shown in Figure \ref{fig:slqaaq}.
    \begin{figure}[H]
\centerline{\includegraphics[width=0.55\textwidth]{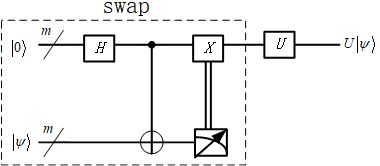}}
  \caption{Swap data qubits and auxiliary qubits}
   \label{fig:slqaaq}
\end{figure}

Here, it uses the minimum CNOT-count to implement the swap operation, since it is impossible to implement the interaction between two qubits using only single-qubit gates.

\begin{remark}\label{rm2}
The swap operation in Figure \ref{fig:slqaaq} only implements the exchange of data qubits to auxiliary qubits by measuring the original data qubits. In general, at least three CNOT gates are required when implementing the swap operation between two arbitrary quantum states, as shown in Figure \ref{fig:fuqg}.
\end{remark}

Applying the relations $UXU^{\dagger}=R_z(2\theta)X$ and $(U\otimes I) \cdot CNOT=CNOT \cdot (U\otimes I)$, we can obtain the fault-tolerant construction of $U$ in Figure \ref{fig:coftsqqg}.
Let $|\psi\rangle=a|0\rangle+b|1\rangle$, then perform a fault-tolerant CNOT operation, giving
\[\frac{1}{\sqrt{2}}\left[|0\rangle(a|0\rangle+b|1\rangle)+e^{i\theta}|1\rangle(a|1\rangle+b|0\rangle)\right] =\frac{1}{\sqrt{2}}\left[(a|0\rangle+be^{i\theta}|1\rangle)|0\rangle+(b|0\rangle+ae^{i\theta}|1\rangle)|1\rangle\right].\]
Finally, measure the second qubit. If the result is 0, the process is complete; otherwise, apply the $UXU^{\dagger}$ operation to the remaining qubits.

In Figure \ref{fig:ftmoaom}, the preparation of the auxiliary state must rely on the auxiliary qubits, and the preparation of the cat state has already reached the minimum CNOT-count using $m-1$ CNOT gates. According to Theorem \ref{Controlled-$(UX)$ decomposition-theorem}, the fault-tolerant controlled-$M'X$ can be decomposed into $m$ CNOT gates for transverse implementation and other single-qubit gates, which has reached the minimum CNOT-count. Therefore, we use the minimum CNOT-count to make $U$ transverse implementation.
\end{proof}

From the above analysis, controlled-$M$ must be implemented transversely without requiring a fault-tolerant construction in theory, where $M=e^{i\theta}R_z(2\theta)$. For $R_k=e^{i\pi/2^k}R_z(\pi/2^{k-1})$, it is clear that for $\theta=\frac{\pi}{2^{k-1}}$ with $k > 3$, this condition is not satisfied. In particular, when $\theta=\frac{\pi}{4}$ with $k = 3$, then $U=T,M=S,M'=ZS$. According to Theorem \ref{Controlled-$(UX)$ decomposition-theorem}, the controlled-$M'X$ can be decomposed into one CNOT gate and two single-qubit gates $A=R_z(\frac{3}{4}\pi)=e^{-3i\pi/8}TS$ and $B=R_z(-\frac{3}{4}\pi)=e^{3i\pi/8}S^{\dagger}T^{\dagger}$, where $T^{\dagger}=T^7, S^{\dagger}=S^3$. Fortunately, the standard set of universal gates \{H, S, T, CNOT\} can construct all physical quantum gates. 

Therefore, combining Figure \ref{fig:coftsqqg} and Figure \ref{fig:ftmoaom}, at least $4m$ fault-tolerant CNOT gates are required to transversely implement the fault-tolerant T gate (without considering the encoding circuits).

\subsection{Result analysis}
We set the accuracy $\epsilon$ to the current maximum fault-tolerant accuracy $10^{-2}$ \cite{fowler2009high, knill2005quantum} in FTQC and compute the minimum resources $\min n_{sum}$ for approximating $R_z(\pi /2^k)$ and $R_z(-\pi /2^k)$ for different values of $k$ using  Algorithm \ref{algorithm1}. According to Remark \ref{rm1}, the minimum T-count is $2\min n_{sum}$. The minimum T-count for approximating $R_z(\pi /2^k)$ and $R_z(-\pi /2^k)$ for different $k$ are shown in Table \ref{tab:1}.
\begin{table}[H]
\footnotesize
\centering
\caption{Minimum T-count with accuracy $\epsilon=10^{-2}$ for different $k(>2)$}
\label{tab:1}
\tabcolsep 6pt 
\begin{tabular}{ccccc}
\toprule
 & $R_z(-\pi /2^k)-\min n_{sum}$ & $R_z(\pi /2^k)-\min n_{sum}$ & $R_z(\pi /2^k),R_z(-\pi /2^k)-\min n_{sum}$ & Minimum T-count\\
\hline
$k=3$ & 57   & 218  & 275 & $550$\\
$k=4$ & 275  & 195  & 470 & $940$\\
$k=5$ & 149  & 172  & 321 & $642$\\
$k=6$ & 149  & 172  & 321 & $642$\\
$k=7$ & 321  & 172  & 493 & $986$\\
$k=8$ & 321  & 172  & 493 & $986$\\
$k=9$ & 493  & 493 & 986 & $1972$\\
$\vdots$ & $\vdots$ & $\vdots$ & $\vdots$ & $\vdots$\\
$k=4096$ & 493  & 493  & 986 & $1972$\\
\bottomrule
\end{tabular}
\end{table}

In Table \ref{tab:1}, when $k\geq 9$, the minimum T-count remains unchanged at 1972. This is because as $k$ increases, $R_z(\pi /2^k)$ and $R_z(-\pi /2^k)$ gradually approach the identity matrix and become insufficiently sensitive to this accuracy. When $k=2$, the controlled-$R_2$ is the controlled-$S$. According to Theorem \ref{Controlled-$U$ decomposition-theorem}, it can be decomposed into one T gate, one T$^{\dagger}$ gate, and two CNOT gates. The $T^{\dagger}$ can be approximated using Algorithm \ref{algorithm1} and the minimum resources $\min n_{sum}$ are 206 with $\epsilon=10^{-2}$, i.e., the minimum T-count is 412. Compared with $T^{\dagger}=T^7$, only seven T gates are required with zero error. Therefore, when $k=2$, the minimum T-count for controlled-$S$ is 8.

We analyze the minimum CNOT-count for the fault-tolerant QFT with different lengths. If the input length of the QFT is $n$, there are $n(n-1)/2$ controlled-$R_k$ in Figure \ref{fig:eqcftqft}, including $(n-1)$ controlled-$R_2$, $(n-2)$ controlled-$R_3$, $\cdots$, two controlled-$R_{n-1}$, and one controlled-$R_n$. According to Proposition \ref{proposition4}, the decomposition method for approximating the controlled-$R_k$ with the minimum T-count is to decompose it into one $R_z(\pi/2^k)$, one $R_z(-\pi/2^k)$ and two CNOT gates, where the minimum T-count for approximating $R_z(\pi /2^k)$ and $R_z(-\pi /2^k)$ are shown in Table \ref{tab:1}. At this point, the required fault-tolerant CNOT-count is also minimum. Additionally, the final stage of the QFT requires $\lfloor \frac{n}{2} \rfloor$ swap gates, which can be implemented by $3\cdot \lfloor \frac{n}{2} \rfloor$ CNOT gates without any auxiliary qubits, according to Remark \ref{rm2}. 

From section 4.1, we know that at least $4m$ fault-tolerant CNOT gates are required for the transversal fault-tolerant T gate, while the logical CNOT gate can be implemented transversally without requiring a fault-tolerant construction, using only $m$ fault-tolerant CNOT gates. Therefore, the exact lower bound  of CNOT gates ($\text{ELB}_{\text{CNOT-count}}$) for the fault-tolerant QFT with different input lengths $n$ is

\begin{equation}
   num(CNOT)=\left(\frac{n(n-1)}{2}\cdot 2+\lfloor\frac{n}{2}\rfloor\cdot 3\right)m+num(T)\cdot 4m , 
\end{equation}
where $num(T)$ is the sum of the minimum T-count with $k=2,3,\cdots,n$. When the accuracy is at most $10^{-2}$, the $\text{ELB}_{\text{CNOT-count}}$ for the fault-tolerant QFT with different $n$ is shown in Figure \ref{fig:QFT-FTCNOT-LBfdwa}.
\begin{figure}[H]
    \centerline{\includegraphics[width=0.95\textwidth]{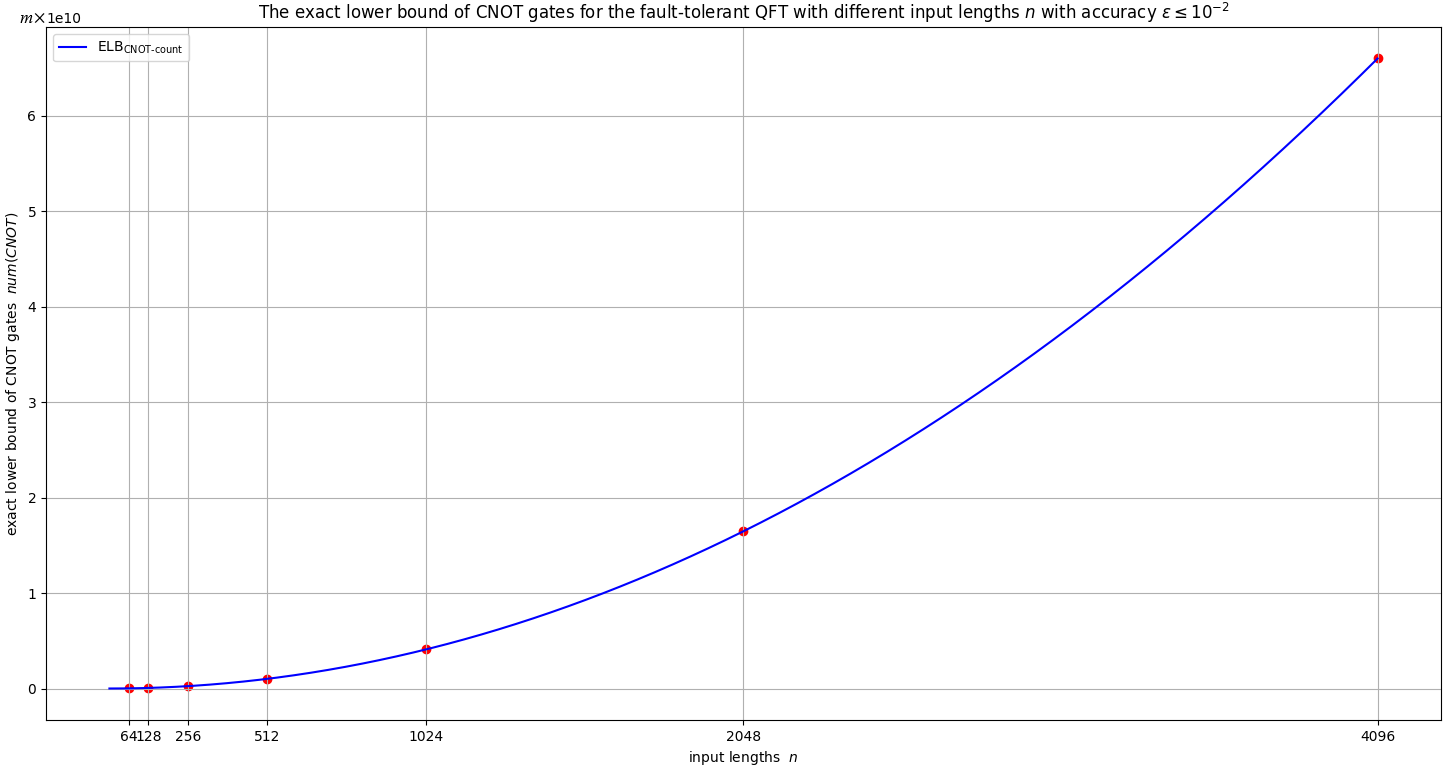}}
    \caption{The $\text{ELB}_{\text{CNOT-count}}$ for the fault-tolerant QFT with different input lengths $n$ at accuracy $\epsilon \leq 10^{-2}$}
    \label{fig:QFT-FTCNOT-LBfdwa}
    \end{figure}

The operation time of fault-tolerant CNOT gates is limited by inherent physical limitations. Especially, in ion trap quantum computers, the CNOT operation between two qubits utilizes collective excitation particles such as phonons to transmit interactions, which makes the operation efficiency limited by the propagation speed of media such as phonons. Moreover, the CNOT gates in ion traps can only be operated serially. Even if different CNOT gates involve different qubits, they cannot be operated in parallel. Therefore, CNOT gates significantly affect the operation time of quantum algorithms. In \cite{yang2013post}, Yang et al. for the first time estimated the average lower bound of the time for a single physical CNOT operation in an ion trap quantum computer by analyzing the phonon speed, which is $2.85 \times 10^{-4}$ s. 

In cryptographic systems, common input lengths include 64-bit, 128-bit, 256-bit, 512-bit, 1024-bit, 2048-bit, and even 4096-bit. We particularly compute their $\text{ELB}_{\text{CNOT-count}}$ and estimate the lower bound of their time for the fault-tolerant QFT, as shown in Table \ref{tab:2}. 

\begin{table}[H]
\footnotesize
\centering
\caption{The $\text{ELB}_{\text{CNOT-count}}$ for the fault-tolerant QFT with common input lengths $n$ at accuracy $\epsilon \leq10^{-2}$}
\label{tab:2}
\tabcolsep 18pt 
\begin{tabular}{ccccc}
\toprule
 & T gates & Fault-tolerant CNOTs & Times (In particular, $m=7$) \\
\hline
$n=64$ &  3429044  &  13720304$m$  & $2.737\times 10^{4}$s/0.317 days\\
$n=128$ & 14902708 & 59627280$m$  & $1.190\times 10^{5}$s/1.377 days\\
$n=256$ & 62081972  & 248393552$m$ & $4.955\times 10^{5}$s/5.735 days\\
$n=512$ & 253368244  & 1013735376$m$ & $2.022\times 10^{6}$s/23.407 days\\
$n=1024$ & 1023651764  & 4095656144$m$  & $8.171\times 10^{6}$s/94.570 days\\
$n=2048$ & 4115062708  & 16464446160$m$  & $3.285\times 10^{7}$s/380.169 days\\
$n=4096$ & 16501260212 & 66021820112$m$  & $1.317\times 10^{8}$s/1524.462 days\\
\bottomrule
\end{tabular}
\end{table}

In particular, when $m=7$, taking the famous Steane code as an example \cite{steane1996error}. For shorter input lengths $n$, such as some lightweight cryptography, the time required to operate QFT on a quantum computer is very short, while for $n=2048$ and $n=4096$, the time is relatively long, 380.169 days and 1524.462 days respectively, which is about more than one year and more than four years. This analysis of $\text{ELB}_{\text{CNOT-count}}$ can provide a reference for the idea of active defense in a quantum setting.

The search space of the minimum resources for approximating $R_z(\pi /2^k)$ or $R_z(-\pi /2^k)$ is $\mathcal{O}(N_0^3)$ for all $k=3,\cdots,n$. In fact, as the accuracy decreases, it is challenging to determine an integer $N_0$ such that the minimum T-count for approximating them with given accuracy $\epsilon$ can be found within the search pace $\mathcal{O}(N_0^3)$ in polynomial time, given an integer $N_0$. For example, given the integer $N_0=2^{11}$, we have verified that the minimum T-count cannot be found in the search space $\mathcal{O}(2^{33})$ when the accuracy is $\epsilon=10^{-3}$, which implies that the search space of the minimum T-count may be much larger than $\mathcal{O}(2^{33})$. This entire process is actually very time-consuming and requires continuously attempting to determine the value of $N_0$ so that the minimum T-count is found for all Z-rotation operators.

\section{Discussion}
As shown in Theorem \ref{sk-theorem}, the Solovay–Kitaev theorem \cite{kitaev2002classical} states that any single-qubit gate can be approximated to an accuracy $\epsilon$ with $\mathcal{O}(\log ^c(1/\epsilon))$ gates from the discrete universal set. This number of gates grows polylogarithmically with decreasing accuracy, which is probably acceptable for almost all practical applications. It has been proven in \cite{nielsen2010quantum} that the value of $c$ cannot be less than 1 and lies between 1 and 2, close to 2, though determining the best possible value remains an open problem. We believe that the problem is at least an NP-hard problem. When the accuracy $\epsilon$ is small enough, there exists an accuracy $\epsilon_0$ and a constant $C'$, such that when $\epsilon<\epsilon_0$, Algorithm \ref{algorithm1} can be regarded as a function $h(\epsilon)\leq C'\log ^c(1/\epsilon)$, and the smaller $\epsilon$ is, the closer $h(\epsilon)$ is to $C'\log ^c(1/\epsilon)$. Therefore, if $\epsilon_2<\epsilon_1\ll \epsilon_0$, $\frac{h(\epsilon_2)}{h(\epsilon_1)}\approx \left(\frac{\log (1/\epsilon_2)}{\log (1/\epsilon_1)} \right)^c$, where $h(\epsilon_1),h(\epsilon_2)$ can be computed based on Algorithm \ref{algorithm1}. The K-SAT problem, including 
four clauses, can be reduced to this problem of determining an positive integer $N_0$ such that $h(\epsilon_1)$ and $h(\epsilon_2)$ exist for approximating $U$ with accuracy $\epsilon_1$ and $\epsilon_2$ respectively in the search space $\mathcal{O}(N_0^3)$, given a single-qubit unitary operator $U$, thereby this problem is at least as hard as the K-SAT problem. As a subroutine for computing the value of $c$, we believe that determining the best possible value is at least NP-hard.

The unique structure of the QFT circuit makes finding its optimal circuit relatively easy. However, equivalence check is QMA-complete \cite{janzing2005non,bookatz2012qma}, meaning that verifying whether two different circuits yield the same unitary transformation is QMA-complete. Quantum circuit optimization involves finding its optimal circuit in the set of different circuits with the equivalence class of a certain unitary transformation. Therefore, the circuit optimization problems, including but not limited to quantum algorithms based on the QFT,  are at least QMA-hard, i.e., given a quantum circuit, it is challenging to find its optimal version. Specifically, optimizing the circuit for Shor's algorithm is at least QMA-hard. Focusing on optimizing the CNOT-count, optimizing quantum circuits for arithmetic operations such as modular addition, modular multiplication, and modular exponentiation that are fundamental to quantum circuit decomposition, is a challenging task that typically requires continuous optimization during the design stage.

The order-finding and factoring problems based on the QFT provide evidence that quantum computers may be more powerful than classical computers, posing a credible challenge to the strong Church–Turing thesis. Our work can introduce new connotations for quantum adversaries. From a practical point of view, if efficient algorithms based on the QFT can be implemented within a meaningful time frame on a quantum computer, then they can be used to break some cryptosystems such as RSA. Otherwise, the development of quantum computers is unlikely to threaten the security of those classical cryptosystems.

\section{Conclusion}
In this paper, we study the exact lower bound problem of CNOT gate complexity for fault-tolerant QFT. We first analyze the complexity of $\text{ELBP}_{\text{T-count}}$ and have shown that this problem is NP-complete. When the ancilla-free controlled-$R_k$ is decomposed in different ways, it generates different single-qubit gates in addition to CNOT gates. As a generalization, we propose an algorithm to exactly compute the minimum T-count for approximating any single-qubit gate with any given accuracy. This algorithm combines numerical and analytical methods to avoid traversing all quantum states in the normalized space and exactly determine whether the error conditions are satisfied. We then provide a property of the proposed algorithm to analyze the minimum resources, aiming to show the NP-completeness of $\text{ELBP}_{\text{T-count}}$ and compute the $\text{ELB}_{\text{CNOT-count}}$ for the fault-tolerant QFT. 

Due to the NP-completeness of $\text{ELBP}_{\text{T-count}}$, it would appear that computing the $\text{ELB}_{\text{CNOT-count}}$ for the fault-tolerant QFT with any given accuracy is intractable. Nevertheless, we can still compute the $\text{ELB}_{\text{CNOT-count}}$ under partial fault-tolerant accuracy. We have proved that the transversal implementation of universal quantum gates reaches the minimum CNOT-count. Furthermore, we approximate the Z-rotation gates after decomposing the ancilla-free controlled-$R_k$ with the current maximum fault-tolerant accuracy $10^{-2}$ and provide the $\text{ELB}_{\text{CNOT-count}}$ with different input lengths. In particular, we estimate the lower bound of the effective execution time for the QFT based on Steane code on ion trap computers. For shorter input lengths $n$, such as some lightweight cryptography, the time required to operate the QFT on an ion trap computer is very short. When $n=2048$ and $n=4096$, the time required is relatively long compared to lightweight cryptography, which is 380.169 days and 1524.462 days respectively.

Finally, we discuss that determining the best possible value of $c$ in $\mathcal{O}(\log^c(1/\epsilon))$ implied by the Solovay–Kitaev theorem is at least an NP-hard problem, and the circuit optimization problem, such as Shor's algorithm, is at least a QMA-hard problem. Our work can introduce new connotations for quantum adversaries and provide a reference for the idea of active defense based on the QFT.

\section*{Acknowledgement}
This work was supported by the Beijing Natural Science Foundation (Grant No.4234084).

\bibliographystyle{unsrt}
\bibliography{bib}
 
\end{document}